\documentclass[11pt,en,authoryear]{elegantpaper}

\theoremstyle{plain}
\newtheorem{assumption}{Assumption}
\newtheorem{property}{Property}

\newcommand{\EE}{\mathbb{E}}
\newcommand{\bds}{\boldsymbol}
\newcommand{\ind}{\perp\!\!\!\perp}
\newcommand{\ol}{\overline} 

\newcommand{\ul}{\underline}

\newcommand{\PRC}{\operatorname{PRC}}

\newcommand{\Cc}{\mathcal{C}}
\newcommand{\Kc}{\mathcal{K}}
\newcommand{\Jc}{\mathcal{J}}
\newcommand{\Rc}{\mathcal{R}}

\newenvironment{savableexample}[1]{%
    \def\currentexlabel{#1}%
    \begin{example}%
}{%
    \end{example}%
    \expandafter\xdef\csname saved@example@\currentexlabel\endcsname{\theexample}%
}

\newenvironment{continuedexample}[1]{%
    \edef\resumenum{\csname saved@example@#1\endcsname}%
    \addtocounter{example}{-1}%
    \expandafter\edef\csname theexample\endcsname{%
        \resumenum\ (continued)%
    }%
    \begin{example}%
}{%
    \end{example}%
    \renewcommand{\theexample}{\arabic{example}}%
}

\title{Probability of Root Cause: A Counterfactual Definition and Its Identification}
\author{Zitong Lu$^{1,2}$ \and Zhi Geng$^{3}$ \and Wei Li$^4$ \and Min Xie$^{1,*}$}
\institute{
	$^{1}$ Department of Systems Engineering, City University of Hong Kong, Hong Kong SAR \\ 
	$^{2}$ Department of Statistics and Data Science, The Chinese University of Hong Kong, Hong Kong SAR \\ 
	$^{3}$ School of Mathematics and Statistics, Beijing Technology and Business University, Beijing, China \\ 
	$^{4}$ Center for Applied Statistics and School of Statistics, Renmin University of China, Beijing, China 
}

\date{}

\begin{document}

\maketitle

\begin{abstract} 
Attributing an observed outcome to its root cause is a central task in domains ranging from medical diagnosis to engineering fault diagnosis. 
Existing approaches either equate the root cause with a root node of the causal graph, as in causal-discovery-based root cause analysis, or target causes more broadly and thereby favour proximate ones, as with the probability of causation and posterior causal effects. 
We argue that this issue stems from the absence of a formal definition of a root cause, which has led to methods designed for other purposes being applied to root cause attribution by default. 
We address this by giving a formal, individual-level definition of a root cause within the potential outcomes framework, based on the notion of an individual cause and a counterfactual root condition motivated by mediation analysis. 
Building on this definition, we propose the probability of root cause (PRC), which quantifies how probable it is that a candidate variable set is the root cause of a given outcome, conditional on observed evidence. 
Under standard assumptions, we establish the identifiability of the PRC and derive an explicit identification formula. 
Two numerical examples illustrate the approach. 
\keywords{Causal attribution; Causes of effects; Counterfactual; Probability of root cause; Root cause analysis}
\end{abstract}

\section{Introduction}

Attributing an observed outcome to its root cause is a central task across diagnostic settings, from medicine to engineering. 
In such settings, identifying a cause of an observed symptom is rarely enough: interventions on the proximate cause often provide only temporary relief, since the symptom recurs whenever the upstream root cause remains unaddressed. 
This distinction between a cause and a root cause is therefore not merely conceptual but practically consequential: isolating and intervening on the root cause, rather than its downstream manifestations, is essential for achieving durable resolution. 
Despite this practical importance, however, the statistical foundations of root cause attribution remain underdeveloped. 
Existing methods either operate at the population level, where individual-level attribution questions cannot be properly addressed, or fall under the broader umbrella of causal attribution, where the specific notion of root causes is not formally distinguished from causes. 
Three lines of work have engaged with this problem, each from a different angle. We review them in turn and explain why none directly solves root cause attribution. 

The first is root cause analysis (RCA), which directly targets root cause attribution and has gained increasing attention across diverse domains such as the internet of things, industry, and healthcare \citep{sole2017Survey, gao2015Survey,peerally2017Problem}. 
Quantitative approaches, including those based on Bayesian networks, fault tree analysis, and deep learning models \citep{xu2022Datadriven,cai2017Bayesian,lei2020Applications,shen2021Fault}, have been developed, yet achieving accuracy and explicability in root cause attribution algorithms remains a persistent challenge. 
Despite significant research efforts, most RCA methods operate at the association level, impeding the causal conclusions and reliable explanations at intervention and counterfactual levels, as formalized by \citet{pearl2018Book}. 
A notable strand of work seeks to incorporate causal interpretation through causal discovery \citep{li2025Root, lin2024Root, ikram2022Root, crane2024Root}. 
However, these methods typically equate a root cause with a root node of the causal graph, conflating two distinct notions. 
Moreover, causal discovery concerns constructing causal graphs from observational data, which operate at the population level and encode distributional causal relationships among variables. 
By contrast, attribution problems, including root cause analysis, are inherently individual-level issues, as emphasized by \citet{dawid2022Effects}.
For example, while diet may be identified as a root node for hypertension at the population level, a particular patient's hypertension may have an entirely different origin. 
Graph-based methods are, in general, not suitable for individual-level attribution problems.

The second is the causal attribution literature, which provides rigorous causal answers but to a related, broader question: what is the cause of an observed outcome? 
\cite{pearl1999Probabilities} proposed three indices to measure how probable an event is to be the cause of another. 
\cite{dawid2014Fitting} defined the probability of causation and emphasized the relationship between the causes of effects and the effects of causes. 
For scenarios involving multiple events influencing one another, \cite{lu2023Evaluating} introduced posterior causal effects to evaluate the likelihood that a specific event caused an occured outcome. 
Furthermore, \citet{li2024Retrospective} extended this framework to accommodate multiple outcome variables, and \citet{zhang2025Identifying} developed an analogous theory for ordinal outcomes.  
These methods provide principled answers to causes of effects, yet they share an inherent tendency to favour proximate causes over upstream ones, which will be discussed in Section~\ref{sec:notation}. 
The reason is structural: a cause and a root cause are different objects, and a measure designed to identify a cause cannot, in general, single out the root cause. 
 
Another related body of work is mediation analysis, which decomposes the total effect of a treatment on an outcome into components transmitted through specified pathways \citep{robins1992Identifiability, pearl2001Direct}.
An extensive literature has developed around natural and controlled direct and indirect effects, nonlinear models, unmeasured confounding, and multiple mediators \citep{imai2010General, shuai2026Mediation, vanderweele2014Mediation, zhou2022Semiparametrica}. 
Recent literature has begun to apply mediation analysis concepts to attribution analysis. 
\citet{rubinstein2025Mediated, kawakami2025Mediation} decompose the probability of causation into direct and indirect components to assess whether an observed adverse outcome was triggered through a mediating pathway or a direct pathway. 
These contributions refine attribution by distinguishing pathways. 
Mediation analysis decomposes effects along known pathways, treating pathways as the objects of inferential interest. 
Root cause attribution, by contrast, requires both identifying the relevant pathway and localizing its upstream origin. 
Adding pathway-level resolution to a method designed for causes of effects, therefore, produces a more refined answer to the same question, but not an answer to root cause attribution.


Despite this progress, a conceptual gap with practical consequences persists across both the RCA and causal attribution literatures. 
Although existing methods touch on the notion of root cause to varying degrees, none provide a formal definition of what a root cause is at the individual level. 
Without such a definition, the scope of applicability of those methods remains inherently ambiguous.

In this paper, we address this gap from first principles. 
Section \ref{sec:notation} introduces notation and preliminaries. 
In Section~\ref{sec:RC}, we formalize causality at the individual level via the notion of an {individual cause} and combine it with a {root condition} inspired by mediation analysis to give a formal definition of a root cause at the individual level. 
Building on this definition, we propose the {probability of root cause} (PRC), a counterfactual measure quantifying how probable it is that a candidate variable set is 
the root cause of an outcome, given the observed evidence. 
In Section \ref{sec:identification}, we establish the identifiability of the PRC, derive an explicit identification formula under standard assumptions, and characterize its relationship with posterior causal effects. 
Section \ref{sec:numerical-examples} illustrates the framework with two numerical examples drawn from \cite{tan2016Introduction} and \cite{yang2021Reliability}. 
We conclude with a discussion in Section \ref{sec:discussion}.

\section{Notation and Preliminaries}\label{sec:notation}
Let $X = (X_1,X_2,\dots,X_p)$ denote binary cause variables with $X_{i}=1$ representing the occurrence of cause event $X_i$, for example, illness, component error, and so on, and $X_i=0$ otherwise. 
Let $Y$ denote some binary outcome variable of interest with $Y=1$ representing the occurrence of outcome $Y$, for example, symptom, system failure, and so on, and $Y=0$ otherwise. 
And $Y$ is a descendant of causes $\bds X$. 
For convenience, let $\ol{X}_i = (X_1,\dots, X_{i-1})$ and $\ul{X}_i = (X_{i+1},\dots,X_{p})$. 
Without loss of generality, we assume that variables $X$ are in a topological order, which means for $1\leq i\leq p$, we have that $X_i$ is not a cause of $\overline{X}_i$, as shown in Fig.~\ref{fig:Xi}. 
We emphasize that our framework requires only this ordering, not the full causal graph; in particular, the conditional dependence structure within $X_{i}$'s ancestors need not be specified.
We use $\Omega$ to denote the sample space of individuals in the population and use $\omega$ for a particular sample point. 
For any variables (or variable sets) $W$ and $V$, let $W_{V=v}(\omega)$ denote the potential outcome of $W$, which means the value of $W$ for individual $\omega$ if $V$ is intervened to $v$. 
For simplification, we omit $\omega$ when there is no confusion. 
For example, if $W=(W_1,\dots,W_{n_{W}})$, we have $W_{v}=[(W_{1})_{v},\dots, (W_{n_{W}})_{v}]$.

\begin{figure}[!ht]
	\centering
	\includegraphics[width=0.5\linewidth]{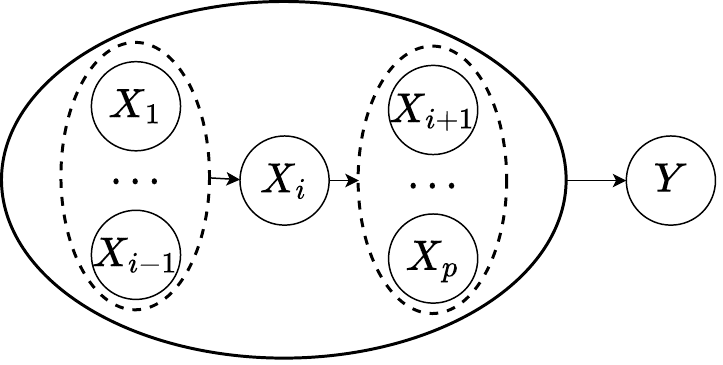}
	\caption[]{A general causal diagram for our model. For $i=1,\dots,p$, each $X_i$ has a similar structure to the figure. }
	\label{fig:Xi}
\end{figure}

In this paper, we suppose the consistency assumption holds, i.e., for any variable set $W$ and $V$, we have $W_{v}=W$ if $V=v$.  We also make a composition assumption, i.e., for any variable sets $W$, $V$, and $U$, we have $W_{uv}=W_{u}$ if $V_{u}=v$. 
These two assumptions connect the potential outcomes of a complex intervention to those of a simple intervention or even the observation. They are widely accepted as axioms of causal inference~\citep{pearl2009Causality}. 

When $p=1$, \cite{pearl1999Probabilities} defined probability of necessity
\begin{align*}
	PN = \Pr(Y_{X=0}=0\mid X=1,Y=1), 
\end{align*}
to measure how probable $X$ is a necessary cause of the effect $Y$. 
\cite{dawid2014Fitting} proposed the probability of causation as 
\begin{align*}
	PC = \Pr(Y_{X=0}=0\mid Y_{X=1}=1), 
\end{align*}
which could be equivalent to PN when $X$ is independent with $Y_{x}$, denoted as $X\ind Y_{x}$, according to \cite{dawid2015Causes}. 

For the situation with multiple $X$, which means $p>1$, \cite{lu2023Evaluating} defined the posterior total/direct causal effect of $X_k$ on $Y$ given $E=e$ as follows: 
\begin{align*}
	PostTCE(X_k\Rightarrow Y \mid E=e) &= \EE(Y_{X_k=1}-Y_{X_k=0}\mid E=e), \\
	PostDCE(X_k\Rightarrow Y_{\ul X_{k} = \ul x^{*}_{k}} \mid E=e) &= \EE(Y_{\ul X_{k} = \ul x^{*}_{k}}-Y_{X_k=0,\ul X_{k} = \ul x^{*}_{k}}\mid E=e), 
\end{align*}
to assess how likely $X_k$ is a cause of the effect $Y$ with observed evidence $E=e$. If $X_{k}=1,Y=1$ is included in $E=e$, then 
\begin{align*}
	PostTCE\left(X_k\Rightarrow Y \mid E=e\right) 
	= \Pr\left(Y_{X_k=0}=0\mid E=e \right), 
\end{align*}
which can be regarded as a generalization of PN. 

Despite the concept of `total causal effect' being intended to be distinguished from the `direct causal effect',  it ultimately falls short of reaching the realm of root cause due to the following property: 

\begin{property}\label{ppt:TCE}
	Under the Assumption~\ref{ass:no-conf} and \ref{ass:cross-world} of no confounding and cross-world independence, if $X_i$ blocks all the pathway from $X_k$ to $Y$ in the causal networks, we have
	\begin{align*}
		\bigl|PostTCE(X_k\Rightarrow Y\mid E=e)\bigr| &\leq \bigl|PostTCE(X_i\Rightarrow Y\mid E=e)\bigr| 
		. 
	\end{align*} 
	Further, if the Assumption~\ref{ass:mono} of monotonicity also holds, we have 
	\begin{align*} 
		PostTCE(X_k\Rightarrow Y\mid E=e) &\leq PostTCE(X_i\Rightarrow Y\mid E=e)
		. 
	\end{align*}
\end{property}

Property~\ref{ppt:TCE} illustrates that PostTCE is constrained by the inherent characteristics of the causal path and thus not a suitable measure for root cause attribution. 
See the supplementary materials \ref{proof:ppt_TCE} for proof. 
While understanding the total causal effect can provide valuable insights into the overall influence of a cause on an outcome, it may not provide specific information to identify the root cause. 
`Total causal effect' and `root cause' play different roles in causal mechanisms.

\section{Defining Root Cause at the Individual Level: Concepts and Probability} \label{sec:RC}

\subsection{Conceptual foundations: cause, root node, and root cause}


To formalize the concept of a root cause, we must first distinguish it from two related but distinct concepts: a `cause' and a `root node'. The failure to separate these ideas is a primary source of confusion in existing attribution methods. We use the simplified causal graph in Figure~\ref{fig:ex-simple} as a running example.

First, a `root cause' must, by definition, be a `cause'. This principle of causality requires that an event cannot be a cause if there is no causal pathway from it to the outcome \citep{richens2020Improving}. For instance, in Figure~\ref{fig:ex-simple}, `Chest Pain' is a consequence of `Heart Disease', not a cause of `High Blood Pressure'. While it may be correlated with hypertension, it cannot be its cause. This principle immediately rules out purely association-based attribution methods.

Second, a `root cause' should not be conflated with a `root node' in a causal graph. A root node (e.g., `Diet') represents a variable that is not caused by any other variable in the graphical model. However, attribution is fundamentally an individual-level query \citep{dawid2022What}, whereas a causal graph represents population-level tendencies. 
Consider a patient with high blood pressure: even if `Diet' is a root node in the causal graph, it cannot be a cause for a patient who maintains a healthy diet. Moreover, for a patient with an unhealthy diet, individual-level causal mechanisms may dictate that their diet does not actually affect their blood pressure, and thus `Diet' would not be the root cause. 

Finally, and most critically, we must distinguish a `root cause' from other causes. Consider a patient with hypertension, where an unhealthy `Diet' has led to `Heart Disease', which in turn causes `High Blood Pressure'. Both `Diet' and `Heart Disease' are causes. However, existing causal attribution methods, such as those based on PostTCE, will invariably assign more weight to the more proximate cause (`Heart Disease'), as guaranteed by Property~\ref{ppt:TCE}. While medically correct, this may not be what is desired from a root cause analysis. The goal is often to identify the upstream factor that, if addressed, would resolve the issue most effectively. Intervening on the heart disease may be a temporary fix if the underlying dietary issue persists.

\begin{figure}[!ht]
	\centering
	\includegraphics[width=0.3\linewidth]{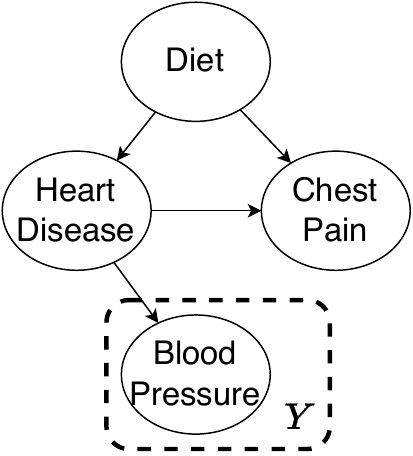}
	\caption[]{A simplified causal network representing hypertension and its risk factors. A more detailed diagram will be given in Section~\ref{sec:numerical-examples}. }
	\label{fig:ex-simple}
\end{figure}

These distinctions motivate our framework. A proper definition of a root cause must be based on individual-level causality and must be able to look beyond proximate causes to identify the origin of the causal chain, without being restricted to root nodes.

\subsection{Cause at the individual level}

While a causal graph may suggest that `Diet' is a cause of `High Blood Pressure' at the population level, this need not hold for a specific individual. 
To build our framework, we first require a precise, individual-level definition of a cause. 
We therefore ground our framework in the potential outcomes notation, aiming to capture the causal mechanisms that govern the relationship between variables for a specific individual. 

\begin{definition}[Individual Cause]\label{def:IC}
    A set of variables $V=(V_{1},\dots,V_{s})$ constitutes an individual cause (IC) of a variable $W$ for a specific individual $\omega$ if there exist two different interventions $v$ and $v'$ such that $W_{V=v}(\omega)\neq W_{V=v'}(\omega)$. 
\end{definition}
This definition formalizes the intuitive notion that $V$ is a cause of $W$ if intervening on $V$ can alter the outcome $W$ for that individual.
Unlike more specific notions such as sufficient or necessary causation, which focus on activating or preventing the occurred event, respectively, depending on the realization, IC concerns any alteration of the outcome event, regardless of the observed realization. 
The concept of IC is closely related to the individual treatment effect (ITE) \citep{neyman1923Application,rubin1974Estimating}.
For a binary treatment $V$, being an IC of $W$ for individual $\omega$ is equivalent to
a non-zero ITE, $\tau(\omega) = W_{V=1}(\omega) - W_{V=0}(\omega) \neq 0$. 
More broadly, \citet{lu2023Evaluating} show that various causal attribution concepts can be unified through individual treatment effects. 
From this perspective, sufficient and necessary causation correspond to a non-zero ITE conditional on $(V(\omega), W(\omega)) = (1,1)$ and $(0,0)$, respectively, and are thus special cases of IC subject to additional restrictions on the observed realization.  
The IC relation also satisfies two natural properties.
First, if $V$ is an IC of $W$ for $\omega$, then any superset $V^{\sup} \supseteq V$ is also an IC of $W$ for $\omega$. 
Second, ICs need not be unique; for instance, if $Y(\omega) = X_1(\omega) \land X_2(\omega)$ with $X_{1}\ind X_{2}$,  then both $X_1$ and $X_2$ are ICs of $Y$ for $\omega$ with $Y(\omega) = 1$. 

The distinction between individual and population-level causality is critical for attribution problems. 
Consider a simple causal model $X \to Y$. 
Using principal stratification \citep{frangakis2002Principal}, the population can be divided into four types based on their potential outcomes: compliers, defiers, always-takers, and never-takers. 
At the population level, $X$ can be regarded as a cause of $Y$. 
But at the individual level, it is not suitable to attribute $Y$ to $X$ for always-takers and never-takers, because intervening on $X$ has no effect on $Y$. 
Our definition correctly captures such individual-level nuance, which is essential for attribution.

\begin{table}[!ht]
	\caption{We list all kinds of individuals with different $(L_{D},L_{Hd},L_{Bp})$ values. If $D\to Hd\to Bp$ forms a causal chain as Figure~\ref{fig:ex-simple}, the corresponding root causes of $Bp$ under observation $(D,Hd,Bp)=(1,1,1)$ are listed in the second-to-last column. \label{tab:ex-3chain}}
	\centering
	\begin{tabular}{cccccc}
		\hline
		Individual Type & $L_{D}$ & $L_{Hd}$ & $L_{Bp}$     & Root Cause of $Bp$ & Individual Causal Diagram\\ \hline
		1     & 0     & 0       & 0       & No Root Cause, or $Bp$ itself                & $D,\quad Hd,\quad Bp$\\
		2     & 0     & 1       & 0       & No Root Cause, or $Bp$ itself                   & $D\to Hd,\quad Bp$\\
		3     & 0     & 0       & 1       & $Hd$               & $D,\quad Hd\to Bp$\\
		4     & 0     & 1       & 1       & $D$               & $D\to Hd\to Bp$\\ \hline
		\end{tabular}
\end{table}

We use the example in Figure~\ref{fig:ex-simple} to illustrate how the concept of IC can help us understand the problems of identifying root causes. 
Let $(D,Hd,Bp)$ form a causal chain diagram as Figure~\ref{fig:ex-simple}, and represent whether an individual has an unhealthy diet, heart disease, and high blood pressure, respectively, with 1 indicating presence and 0 absence. 
Chest pain is excluded from the discussion, for it does not influence blood pressure. 
Given the evidence, $(D,Hd,Bp)=(1,1,1)$, one may naively attribute the root cause of $Bp=1$ to $D=1$ by the causal graph or attribute it to $Hd=1$ by the PostCEs method.  
However, both methods could be incorrect due to the causal mechanisms. 
For a variable $W$, let $L_{W}(\omega)$ denote whether $W$ has an IC for $\omega$; and denote $L_{W}(\omega)=1$ if and only if there exists a set $V$ constituting an IC of $W$ for $\omega$. 
We will omit $\omega$ when there is no confusion. 
In a DAG, if there are no parent variables of $W$, $Pa(W)=\emptyset$, then we set $L_{W}=0$. 
If $L_{Bp}=0$, then we can not change the value of $Bp$ by intervening on other variables. Thus, there is no IC or root cause of $Bp$ among $D$ and $Hd$, or we could say $Bp$ is the root cause of itself. 
If $(L_{D},L_{Hd},L_{Bp})=(0,0,1)$, then we can change the value of $Bp$ by intervening on $Hd$, but not by intervening on $D$; thus $Hd$ is an IC of $Bp$ and there is no IC of $Hd$, $Hd$ should be the root cause rather than $D$. 
If $(L_{D},L_{Hd},L_{Bp})=(0,1,1)$, then we can change the value of $Hd$ by intervening on $D$, and thereby affect the value of $Bp$. 
In this case, $D$ should be the root cause of $Bp=1$, while $Hd$ is a cause but not a root cause. 
From a perspective of graphs, the causal diagram is the sum of individual causal diagrams; we should do attribution based on the individual causal diagram rather than the population-level DAG. 
In practice, one patient with observation $(D,Hd,Bp)=(1,1,1)$ can belong to any kind listed in Table~\ref{tab:ex-3chain}, and any of $Bp$, $Hd$, and $D$ could be the root cause of high blood pressure due to the latent causal mechanisms. 

The above example may suggest that root causes can be identified recursively by repeatedly tracing the causes of causes. 
However, IC does not satisfy transitivity: a cause of a cause of an effect is not necessarily itself a cause of that effect. 
\begin{savableexample}{ex:or}\label{ex:or}
	Given the model with
	\begin{align*}
		Y(\omega) = X_1(\omega) \lor X_2(\omega), \qquad X_2(\omega) = 1-X_1(\omega).
	\end{align*} 
	Then $X_1$ is an IC of $X_2$, and $X_2$ is an IC of $Y$, but $X_1$ is not an IC of $Y$ (since $Y$ is always 1). Thus, $X_{1}$ is obviously not a root cause of $Y$. 
\end{savableexample}
This non-transitivity highlights that one cannot simply trace back a chain of ICs to find a root cause; a more sophisticated definition of root cause is required.

\subsection{Definition and probability of root cause}

The root cause, as its name suggests, should meet two conditions:
`cause' and `root'. 
For the cause condition, a root cause must be an IC, which is defined and discussed in previous sections. 
But for the root condition, there is no clear definition in the literature. 
From the perspective of a causal path, finding the `root' means finding the start of a causal path. As discussed in the no-transitivity, even for the same individual causal diagram, the underlying causal paths differ. 
This prevents us from simply identifying the root through the graph structure alone, even for individual causal diagrams; we need to further decompose the causal paths. 



Based on the two conditions of root cause, we propose our definition of the root cause. 
Our definition here does not depend on binary variables and can be easily extended to non-binary variables. 

\begin{definition}[Root Cause]\label{def:RC}
	A binary variable set $V$ is said to constitute a root cause (RC) of the binary scalar variable $W$ for a specific individual $\omega$ if and only if the following conditions hold:
	\begin{enumerate}[(a)]
		\item\label{itm:cause} $V$ is the IC of $W$ for $\omega$ (cause); 
		\item\label{itm:root} There is no indirect effect on $W$ through $V$, specifically, for any vector $U$ with $U\cap V=\emptyset$ and $U\cap W=\emptyset$ and its values $u\neq u'$, we have $W_{V_{u}}(\omega)=W_{V_{u'}}(\omega)$ (root).  
		%
		%
		%
	\end{enumerate}
\end{definition}
We regard root causes as a special case of causes: an RC must be an IC, which is the first condition. 
As $V$ is the IC of $W$ for $\omega$, there is a causal pathway from $V$ to $W$ for this individual. 
The second condition sets up a criterion for verifying whether $V$ is the origin of that causal pathway to $W$. 
This condition uses a nested counterfactual to check if any other variable $U$ acts as an ancestor of $W$ through $V$. 
It requires that even if an intervention on $U$ may alter the value of $V$, such a change in $V$ is not allowed to affect $W$ indirectly. 
If it did, $U$ (or one of its components) would be a more fundamental cause acting through the causal pathway $U\to V\to W$, and thus $V$ would not be the `root'. 
This condition is closely related to the concept of indirect effects~\citep{robins1992Identifiability,pearl2001Direct}
\begin{align*}
	IE = W_{uV_{u}}-W_{uV_{u'}}. 
\end{align*}
Accordingly, we can rewrite the root condition in the definition in the form of $W_{U,V_{U}}(\omega)-W_{U,V_{u'}}(\omega)=0$ for all $u'$, which indicates the absence of an indirect effect of $U$ on $W$ through the mediator $V$. 
When the root condition is violated, $U$ or its elements of $U$ can be incorporated into the causal pathway to $W$ as the ancestors of $V$ and $W$. 
It is worth noting that the RC need not be unique, which means there could be multiple RCs of an effect for a specific individual. 
Consider the illustrative example where $Y=X_{1}\land X_{2}$ and $X_{1}\ind X_{2}$: when $Y(\omega)=1$, both $X_{1}$, $X_{2}$, as well as the variable set $(X_{1},X_{2})$, can serve as RCs of $Y$ for $\omega$. 

When $(X_{1},\dots,X_{p},Y)$ constitute a causal chain diagram, we have the following corollary:
\begin{corollary}[RC for Causal Chain Diagram]
	For a causal chain DAG, $X_{1}\to X_{2}\to \dots \to X_{p}\to Y$, a variable $X_{k}$ is said to be the RC of $Y$ for $\omega$ if and only if the following conditions hold: 
	\begin{enumerate}[(a)]
		\item $L_{Y}(\omega)= L_{X_{p}}(\omega) =\dots=L_{X_{k+1}}(\omega)=1$, 
		\item $L_{X_{k}}(\omega)=0$, 
	\end{enumerate}
	where $L_{X_{k}}(\omega)$ is the indicator whether $X_{k}$ has an IC for $\omega$.
\end{corollary}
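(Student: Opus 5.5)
We will prove the corollary by unpacking Definition~\ref{def:RC} along the chain, using only consistency and composition. Throughout, in the chain $X_{1}\to\dots\to X_{p}\to Y$ each variable has a single parent. By composition, intervening on any set changes a downstream variable only through that variable's parent. Hence $L_{X_{j}}(\omega)=1$ iff $X_{j-1}$ is an IC of $X_{j}$, i.e.\ $(X_{j})_{X_{j-1}=0}(\omega)\neq(X_{j})_{X_{j-1}=1}(\omega)$. Likewise $L_{Y}(\omega)=1$ iff $X_{p}$ is an IC of $Y$. Our reading of the chain DAG is that intervening on a non-parent affects a variable only via its parent. So the first step is to show that, for the single-parent structure, $L_{W}=1$ is equivalent to $W$ being a non-constant function of its parent for $\omega$.

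\textbf{Cause condition.} We show that $X_{k}$ is an IC of $Y$ iff $L_{X_{k+1}}=\dots=L_{X_{p}}=L_{Y}=1$. For binary variables, each link with $L=1$ makes the child a bijective (identity or negation) function of the parent for $\omega$. Composing bijections along the chain gives a non-constant map $x_{k}\mapsto Y_{X_{k}=x_{k}}(\omega)$, so sufficiency holds. This is exactly where binarity rules out the non-transitivity of Example~\ref{ex:or}. For necessity, suppose some link $j>k$ has $L=0$. Then the downstream variable is constant in its parent, and by composition $Y_{X_{k}=x_{k}}$ factors through that constant, so it does not depend on $x_{k}$.

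\textbf{Root condition.} Given (a), we show that condition~\ref{itm:root} of Definition~\ref{def:RC} holds iff $L_{X_{k}}=0$. If $L_{X_{k}}=0$, then $X_{k}$ is not altered by its parent $X_{k-1}$, and hence, via composition, by any $U$ disjoint from $X_{k}$ and $Y$. Indeed, interventions on variables downstream of $X_{k}$ cannot affect it by the topological order. So $(X_{k})_{u}=(X_{k})_{u'}$, and therefore $Y_{(X_{k})_{u}}=Y_{(X_{k})_{u'}}$. Conversely, if $L_{X_{k}}=1$, take $U=X_{k-1}$. Then $(X_{k})_{0}\neq(X_{k})_{1}$, and by (a) the map from $X_{k}$ to $Y$ is injective, so $Y_{(X_{k})_{0}}\neq Y_{(X_{k})_{1}}$, which violates the root condition. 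In the boundary case $k=1$ the variable has no parent, so $L_{X_{1}}=0$ by convention and the root condition holds trivially.

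The main obstacle is handling general $U$ in the root condition, since $U$ may contain downstream variables or several variables at once. The plan is to reduce via composition to the effect of $U$ on $X_{k}$, which only passes through $X_{k-1}$, and to rely on the fact that in a chain nothing but the parent enters $X_{k}$'s structural mechanism. I will state this single-parent structural reading explicitly as the interpretation of ``causal chain DAG'' at the individual level.
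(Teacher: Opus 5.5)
Your proof is correct. The paper gives no proof of this corollary; it only remarks that the definition reproduces the case analysis of Table~\ref{tab:ex-3chain}. So your argument supplies the missing proof rather than retracing an existing one.

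\textbf{Root half.} This uses the same mechanism as the paper's proof of Property~\ref{ppt:PRC}(b). Once $X_k$ is an IC of $Y$, the binary map $a\mapsto Y_{X_k=a}(\omega)$ is injective. Hence $Y_{(X_k)_u}=Y_{(X_k)_{u'}}$ for all admissible $U$ exactly when no intervention moves $X_k$, i.e.\ $L_{X_k}=0$. You are right to condition on (a) here. When $X_k$ is not an IC of $Y$, the root condition holds vacuously, so the paper's unconditional identity $\Rc_{X_k}^{Y}=1-L_k$ is valid only on $\{\Cc_{X_k}^{Y}=1\}$. The direction $L_{X_k}=0\Rightarrow$ root follows directly from the definition of $L$ and needs no chain structure.

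\textbf{Cause half.} This part has no counterpart in the paper. You write $Y_{X_k=x_k}(\omega)=g_Y\circ g_p\circ\dots\circ g_{k+1}(x_k)$, with $g_j(a)=(X_j)_{X_{j-1}=a}(\omega)$ and $g_Y(a)=Y_{X_p=a}(\omega)$. You then use that a binary link is non-constant iff it is a bijection. You also state explicitly the exclusion restrictions the chain encodes, which consistency and composition alone do not supply. Together these give the corollary for every $p$ and $k$ and show where each hypothesis enters.

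\textbf{One correction to your commentary.} The non-transitivity in Example~\ref{ex:or} comes from the extra edge $X_1\to Y$ cancelling the path through $X_2$. It is the single-parent structure that excludes it, not binarity. Binarity excludes a different failure: a multi-valued intermediate whose non-constant links compose to a constant. For example, take $X_2=X_1$ with $X_2$ allowed to take values in $\{0,1,2\}$, and $Y=1$ iff $X_2=2$. Then $X_2$ is an IC of $X_1$'s child link and of $Y$, yet $X_1$ is not an IC of $Y$. So both hypotheses are genuinely used.
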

The definition of RC is consistent with the analysis of the causal chain case in Table~\ref{tab:ex-3chain}, showing that our definition is suitable and can capture the intuitive notion of root cause in a simple causal chain. 

The Definition~\ref{def:RC} of RC is based on the individual-level causal mechanisms, which are unobservable in practice. 
Therefore, we further define the probability of root cause to assess how probable a variable set is the root cause of the outcome variable, under the model with variables $(X_{1},\dots,X_{p},Y)$. 
For convenience, with a index set $\Kc=\{k_{1},\dots, k_{r}\}$ with $1\leq k_{1}<k_{2}< \dots < k_{r}\leq p$, we denote $X_{\Kc}=(X_{k_{1}},\dots, X_{k_{r}})$, $\ol \Kc = \{j\mid 1\leq j<k_{1}\}$, $\ul \Kc = \{j\mid k_{r}<j\leq p\}$, and $\hat \Kc= \{j \mid k_{1}\leq j\leq k_{r}\}\setminus \Kc  = \{j\mid k_{1}<j<k_{r},j\notin \Kc\}$. 
Let $E=e$ denote the observed evidence for the individual we want to analyse, where $E\subseteq \{X_{1},\dots,X_{p},Y\}$.

We first define two indicators for the cause and root conditions in Definition~\ref{def:RC}: 
\begin{align}
	\Cc_{X_{\Kc}}^{Y} &= \bigvee_{x_{\Kc}'\neq x_{\Kc}''} \{Y_{x_{\Kc}'}\neq Y_{x_{\Kc}''}\}, \label{eq:C-indicator}
	\\
	\Rc_{X_{\Kc}}^{Y} &= \bigwedge_{x_{\ol \Kc \cup \hat \Kc}'\neq x_{\ol \Kc\cup \hat \Kc}''} \bigl\{Y_{(X_{\Kc})_{x_{\ol \Kc\cup \hat \Kc}'}}= Y_{(X_{\Kc})_{ x_{\ol \Kc\cup \hat \Kc}''}}\bigr\}. \label{eq:R-indicator}
\end{align}
Here $\Cc_{X_{\Kc}}^{Y}=1$ indicates that $X_{\Kc}$ is the IC of $Y$, and $0$ otherwise. 
$\Rc_{X_{\Kc}}^{Y}=1$ indicates that there is no indirect effect through $X_{\Kc}$ on $Y$, i.e., $X_{\Kc}$ is the origin of the causal pathway to $Y$, and $0$ otherwise. And we denote $\Rc_{X_{\Kc}}^{Y}=1$ if $X_{\ol \Kc\cup \hat \Kc}=\emptyset$, which means if there is no ancestor of $X_{\Kc}$, then $X_{\Kc}$ satisfies the root condition by default. 
Based on these two indicators, we can define the probability of root cause as follows.

\begin{definition}[Probability of Root Cause]\label{def:PRC}
	Given the observed evidence $E=e$, for a nonempty set $\Kc\subseteq \{1,\dots,p\}$, we can define the probability that $X_{\Kc}$ constitute a RC of $Y$ as follows:
	\begin{equation}\label{eq:PRC}
		\PRC(X_{\Kc}\Rightarrow Y\mid E=e) = \Pr\bigl(\Cc_{X_{\Kc}}^{Y} = 1, \Rc_{X_{\Kc}}^{Y}=1\mid E=e\bigr), 
	\end{equation}
    where $\Cc_{X_{\Kc}}^{Y}$ and $\Rc_{X_{\Kc}}^{Y}$ are defined in equations \eqref{eq:C-indicator} and \eqref{eq:R-indicator}. 
\end{definition}

The observed evidence $E = e$ is used to condition the probability, enabling us to incorporate available information about a specific individual into our assessment of RC. 
This conditioning is particularly important because RC is defined in terms of individual potential outcomes, which are inherently unobservable in practice. The observed evidence thus provides partial information about these potential outcomes, thereby improving the precision of the causal assessments.
In the medical diagnosis context, let $X$ denote the diseases and $Y$ denote the symptom. 
The observed evidence $E = e$ captures the patient's observed symptom profile and disease status, which can be leveraged to enable personalized diagnosis. 
When $E=\emptyset$, no information is available, and the probability of root cause thus reflects population-level distribution. 
In contrast, when $\{E=e\}=\{X_{1}=1,Y=1\}$, we observe both the presence of disease $X_{1}$ and a positive symptom status, allowing the PRC to be updated to reflect this individual-specific evidence. 

As Table \ref{tab:ex-3chain} discussed, there is the case that $Y$ does not have a root cause, or we could say, $Y$ is the root cause of itself. 
Based on the two indicators \eqref{eq:C-indicator} and \eqref{eq:R-indicator}, replacing $X_{\Kc}$ with $Y$, we denote 
\begin{align}
	\Cc_{Y}^{Y} &= \{Y_{Y=1}\neq Y_{Y=0}\}=1, \label{eq:C-Y}
	\\
	\Rc_{Y}^{Y} &= \bigwedge_{x'\neq x''} \{Y_{Y_{x'}}=Y_{Y_{x''}}\} = \bigwedge_{x'\neq x''} \{Y_{x'}=Y_{x''}\}, \label{eq:R-Y}
\end{align} 
where $\Cc_{Y}^{Y}=1$ because $Y$ is always an IC of itself as intervening on $Y$ can change the value of $Y$; and $\Rc_{Y}^{Y}$ is the indicator if there is no cause of $Y$ other than $Y$ itself. 
Thus, we can define the probability that there is no RC of $Y$ as the probability that $Y$ itself is the RC of $Y$ as follows. 
\begin{definition}[Probability of No Root Cause]\label{def:PRC-Y}
	Given the observed evidence $E=e$, we define the probability that there is no RC of $Y$ as
	\begin{equation*}
		\PRC(\emptyset\Rightarrow Y\mid E=e) 
		= \PRC(Y\Rightarrow Y\mid E=e)
		= \Pr\bigl(\Cc_{Y}^{Y}=1,\Rc_{Y}^{Y}=1\mid E=e\bigr), 
	\end{equation*}
    where $\Cc_{Y}^{Y}$ and $\Rc_{Y}^{Y}$ are defined in equations \eqref{eq:C-Y} and \eqref{eq:R-Y}. 
\end{definition}
A simple corollary is that $\PRC(\emptyset\Rightarrow Y\mid E=e)$ equals the probability that $X=(X_{1},\dots,X_{p})$ is not the RC of $Y$: 
\begin{align}\label{eq:PRC-Y-X}
	\PRC(\emptyset \Rightarrow Y\mid E=e) = 1 - \PRC(X\Rightarrow Y\mid E=e).
\end{align}

We revisit Example~\ref{ex:or} to illustrate how the two indicators $\Cc_{X_{\Kc}}^{Y}$ and $\Rc_{X_{\Kc}}^{Y}$ work together to determine the PRC. 
\begin{continuedexample}{ex:or}
	The causal mechanisms are given by $Y=X_{1}\lor X_{2}$ and $X_{2}=1-X_{1}$. 
	\begin{itemize}
		\item For $X_{1}$: $\Cc_{X_{1}}^{Y}=\{Y_{X_{1}=1}\neq Y_{X_{1}=0}\}=0$ and $\Rc_{X_{1}}^{Y}=1$, so $\PRC(X_{1}\Rightarrow Y\mid E=e)=0$. Here $X_{1}$ satisfies the root condition but is not an IC of $Y$. 
		\item For $X_{2}$: $\Cc_{X_{2}}^{Y}=\{Y_{X_{2}=1}\neq Y_{X_{2}=0}\}=1$ and $\Rc_{X_{2}}^{Y}=0$, so $\PRC(X_{2}\Rightarrow Y\mid E=e)=0$. Here $X_{2}$ is an IC of $Y$ but does not satisfy the root condition. 
		\item For the set $X_{\{1,2\}}$: $Y_{00}\neq Y_{11}$ and $X_{\{1,2\}}$ has no ancestor, so $\Cc_{X_{\{1,2\}}}^{Y}=1$ and $\Rc_{X_{\{1,2\}}}^{Y}=1$, giving $\PRC(X_{\{1,2\}}\Rightarrow Y\mid E=e)=1$. The set $X_{\{1,2\}}$ is the root cause of $Y$. 
	\end{itemize} 
\end{continuedexample}

There are some properties of PRC: 
\begin{property}\label{ppt:PRC}\ 
	\begin{enumerate}[(a)]
		\item\label{enu:IC} 
		If $X_{\mathcal{J}}$ is an IC of $Y$, then any superset $X_{\Kc}$ containing $X_{\mathcal{J}}$ is also an IC of $Y$. 
		But if $X_{\mathcal{J}}$ is the RC of $Y$, a superset $X_{\Kc}$ need not be an RC of $Y$. 
		\item\label{enu:k} If $\Kc=\{k\}$, we have $\Rc_{X_{k}}^{Y}=1-L_{k}$,
			\begin{equation}\label{eq:PRC-k}
				\PRC(X_{k}\Rightarrow Y\mid E=e) 
				= \Pr\bigl(Y_{X_{k}=0}\neq Y_{X_{k}=1}, L_{k}=0\mid E=e\bigr), 
			\end{equation}
			where $L_{k}=0$ if and only if there are no variables that constitute an IC of $X_{k}$. 
		\item\label{enu:1-k} If $\Kc=\{1,\dots,k\}$, we have $\Rc_{X_{\Kc}}^{Y} = 1$ and 
			\begin{align*}
				\PRC(X_{\Kc}\Rightarrow Y\mid E=e) &\geq  \PRC(X_{\mathcal{J}}\Rightarrow Y\mid E=e),  
			\end{align*}
			for any $\mathcal{J}\subseteq \Kc$. 
		\item\label{enu:non-causation} If there is no causal pathway from $X_{i}$ to $Y$, then 
			\begin{align*}
				\PRC(X_{\Kc\setminus\{i\}}\Rightarrow Y\mid E=e) &= \PRC(X_{\Kc}\Rightarrow Y\mid E=e).
			\end{align*}
	\end{enumerate}
\end{property} 

Proof is given in the supplementary material~\ref{proof:ppt_PRC}.
Property~\ref{ppt:PRC}\ref{enu:IC} states that the IC condition is monotone under set inclusion, whereas the RC condition is not. Intuitively, adding a variable $X_{k}$ to a root-cause set 
$X_{\mathcal{J}}$ may fail to preserve the root condition, because $X_{k}$ may itself possess an IC outside $X_{\mathcal{J}}$ that has an indirect effect on $Y$. 
Property~\ref{ppt:PRC}\ref{enu:k} specializes the definition to a single variable. 
Equation~\eqref{eq:PRC-k} reveals that the root condition reduces to the absence of any IC of $X_{k}$, thus the PRC equals the probability that $X_{k}$ is an IC of $Y$ while itself having no IC. 
This property aligns with the intuition on root cause and the discussion in Table~\ref{tab:ex-3chain}. 
Property~\ref{ppt:PRC}\ref{enu:1-k} shows that when $X_{\Kc}$ has no ancestors, the root condition holds automatically, and the PRC is monotonically non-decreasing in the candidate set.
Property~\ref{ppt:PRC}\ref{enu:non-causation} says that variables with no causal pathway to $Y$ are irrelevant: removing such a variable from $X_{\Kc}$ leaves the PRC unchanged.





\section{Identifiability of Probability of Root Cause} \label{sec:identification}

For any variable vector $W=(W_1,\dots, W_s)$ and its value $w$ and $w^*$, let \(w^*\preceq w\) denote that \(w_i^*\leq w_i\) for all \(1\leq i\leq s\). 
To identify the probability of the root cause, we need some commonly-used assumptions: monotonicity, no confounding, and cross-world independence assumptions \citep{pearl2009Causality,lu2023Evaluating,zhao2023Conditional}. 

\begin{assumption}[Monotonicity]\label{ass:mono}
	Let \(k=\min \Kc\), 
	\begin{enumerate}[(a)]
		\item \(X\) satisfies the monotonicity in the sense that $(X_i)_{\ol{X}_i=\ol{x}_i^{(0)}} \leq (X_i)_{\ol{X}_i=\ol{x}_i^{(1)}}$ for any \(i\geq k\) whenever \(\ol{x}_i^{(0)} \preceq \ol{x}_i^{(1)}\); 
		\item \(Y\) satisfies the monotonicity in the sense that $(Y)_{X=x^{(0)}} \leq (Y)_{X=x^{(1)}}$ whenever $x^{(0)} \preceq x^{(1)}$.  
	\end{enumerate} 
\end{assumption}
\begin{assumption}[No Confounding]\label{ass:no-conf}\ 
	\begin{enumerate}[(a)]
		\item\label{enu:no-conf-1} $(X_i)_{\ol{x}_i}\ind \ol X_{i}$, for $1\leq i\leq p$;
		\item\label{enu:no-conf-2} $Y_{x} \ind X$. 
	\end{enumerate}
\end{assumption}
\begin{assumption}[Cross World Independence]\label{ass:cross-world}
	Potential outcomes of each variables, $Y_{X=x}$ and $(X_i)_{\ol{X}_i = \ol{x}_i}$ for $1\leq i\leq p$ are independent. 
\end{assumption}
The monotonicity assumption means that the causal mechanisms of each variable are never preventive. 
For example, if we cure some diseases (or repair some parts of a machine), the symptoms of the patient (or the operational states of that machine) will not get worse. 
Assumption \ref{ass:no-conf} is also known as `exogeneity' in the econometrics literature. 
It means that the potential outcomes of each variable are independent of its ancestors. 
It is worth noting that Assumption \ref{ass:cross-world} could be mild under the structural equation model (SEM): 
If $(X,Y)$ satisfies a nonparameteric SEM $X_i = f_{i}(\ol X_i;\epsilon_{i})$ with $\epsilon_{i}\ind \ol X_i$ and $Y= f_{Y}(X;\epsilon_{Y})$ with $\epsilon_{Y}\ind X$, Assumption \ref{ass:no-conf}\ref{enu:no-conf-1} and \ref{ass:no-conf}\ref{enu:no-conf-2} imply $(\epsilon_1,\dots,\epsilon_p,\epsilon_Y)$ are independent, thus Assumption \ref{ass:cross-world} holds.


Under Assumptions \ref{ass:mono}, \ref{ass:no-conf}, and \ref{ass:cross-world} of monotonicity, no confounding, and cross-world independence, we show the identifiability of the probability of root cause and its identification equation. 

\begin{theorem}\label{thm:PRC}
	Given the evidence \(E=e\) where $E\subseteq (X,Y)$, we have 
	\begin{equation*}\begin{aligned}
		& \PRC(X_{\Kc}\Rightarrow Y \mid E=e) 
        \\
        =& \sum_{(x,y)} \PRC(X_{\Kc}\Rightarrow Y \mid X=x,Y=y) \times \Pr(X=x,Y=y\mid E=e)
		. 
	\end{aligned}\end{equation*}
	If Assumptions \ref{ass:mono}, \ref{ass:no-conf} and \ref{ass:cross-world} hold, \(\PRC(X_\Kc\Rightarrow Y\mid X=x, Y=y)\) can be identifiable. The identification formula is 
	\begin{equation}\label{eq:PRC-id}\begin{aligned}
		&\quad \PRC(X_{\Kc}\Rightarrow Y\mid X=x,Y=1) 
		\\
		&= \sum_{\substack{x^*\preceq x' \preceq x: \\ x_{\ol \Kc}^{*}=x_{\ol \Kc}'=x_{\ol \Kc},\ x_{\Kc}^{*}=0}} \frac{\Pr(Y=1 \mid x') - \Pr(Y=1 \mid x^{*})}{\Pr(Y=1\mid X=x)} 
        \\&\qquad\quad \times \prod_{i\in \Kc} \Bigl\{ (1-x_{i}') + x_{i} (2x_{i}'-1) \cdot \frac{\Pr(X_{i}=1 \mid \ol X_{i}=\ol x_{i}'')}{\Pr(X_{i}=1\mid \ol X_{i}=\ol x_{i})} \Bigr\}
		\\&\qquad\quad \times \prod_{i\in \hat \Kc\cup \ul \Kc} \Bigl\{ (1-x_{i}')(1-x_{i}^*) + x_{i}(2x_{i}'-1)(1-x_{i}^{*}) \frac{\Pr\bigl(X_{i}=1\mid \ol X_{i}=\ol x_{i}'\bigr)}{\Pr(X_{i}=1\mid \ol X_{i}=\ol x_{i})} 
		\\&\qquad\qquad\qquad\qquad + x_{i} x_{i}'(2x_{i}^{*}-1) \frac{\Pr\bigl(X_{i}=1\mid \ol X_{i}=\ol x_{i}^*\bigr)}{\Pr(X_{i}=1\mid \ol X_{i}=\ol x_{i})} \Bigr\}
		, 
	\end{aligned}\end{equation}
	where $x_{\Kc}''=x_{\Kc}'$, $x_{\ol \Kc\cup \hat \Kc}''=0$. And 
	\begin{equation*}\begin{aligned}
		\PRC(X_{\Kc}\Rightarrow Y\mid X=x,Y=0) 
		= \PRC(\tilde X_{\Kc}\Rightarrow \tilde Y\mid \tilde X=1-x,\tilde Y=1) 
		, 
	\end{aligned}\end{equation*}
	where $\Pr(\tilde X,\tilde Y) = \Pr(X=1-x,Y=1-y)$. 
\end{theorem}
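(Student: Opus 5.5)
The first display needs no assumptions. Since $E\subseteq(X,Y)$, the event $\{E=e\}$ is a disjoint union of the events $\{X=x,Y=y\}$ over the full configurations $(x,y)$ consistent with $e$. Writing $\Pr(\Cc=1,\Rc=1\mid E=e)$ as a mixture over these configurations gives the law of total probability with weights $\Pr(X=x,Y=y\mid E=e)$. The substance is therefore the identification of $\PRC(X_\Kc\Rightarrow Y\mid X=x,Y=1)$. The case $Y=0$ follows by the relabelling $\tilde X=1-X$, $\tilde Y=1-Y$. This relabelling preserves monotonicity, since reversing both orders keeps the implications, and it maps the IC and root indicators to themselves, because both are defined through equalities and inequalities of potential outcomes and not through their values.

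For $Y=1$, I would first use monotonicity (Assumption~\ref{ass:mono}) to collapse the disjunction in \eqref{eq:C-indicator} and the conjunction in \eqref{eq:R-indicator}. Under monotonicity of $Y$ in $X$ and of each $X_i$ ($i\ge k$) in its predecessors, $\Cc_{X_\Kc}^Y=1$ is equivalent to $Y_{x_\Kc=1}=1,\ Y_{x_\Kc=0}=0$. Similarly, the root condition reduces to comparing the extreme interventions $x_{\ol\Kc\cup\hat\Kc}=0$ and $=1$ (more precisely, the observed value versus $0$ on $\ol\Kc$).

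Next, I would describe each individual by response types. Each $X_i$ has potential outcomes $(X_i)_{\ol x_i}$, and $Y$ has $Y_x$. By consistency and composition, every nested counterfactual $Y_{(X_\Kc)_{u}}$ can be written as $Y$ evaluated along a recursively computed path $x'$. The event $\{\Cc=1,\Rc=1,X=x,Y=1\}$ would then be decomposed as a disjoint union over two index vectors. The first is $x'$, the vector reached when $X_\Kc$ is left natural but ancestors outside $\Kc$ are set to $0$ on $\hat\Kc$ while $\ol\Kc$ stays observed. The second is $x^*$, the vector reached when additionally $x_\Kc=0$. This is where the index set in \eqref{eq:PRC-id} comes from: $x^*\preceq x'\preceq x$, with $x^*_{\ol\Kc}=x'_{\ol\Kc}=x_{\ol\Kc}$ and $x^*_\Kc=0$. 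Within each cell, the conditions factor into statements about separate potential-outcome variables:
\begin{itemize}
\item $Y_{x'}=1,\ Y_{x^*}=0$;
\item for $i\in\Kc$, $(X_i)_{\ol x_i}=x_i$ together with $(X_i)_{\ol x''_i}=x'_i$, which is the root condition: ancestors set to $0$ do not move $X_\Kc$;
\item for $i\in\hat\Kc\cup\ul\Kc$, $(X_i)_{\ol x_i}=x_i$, $(X_i)_{\ol x'_i}=x'_i$, and $(X_i)_{\ol x^*_i}=x^*_i$.
\end{itemize}

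By cross-world independence (Assumption~\ref{ass:cross-world}), the probability of each cell is a product over variables. No confounding (Assumption~\ref{ass:no-conf}) turns each marginal of a single-intervention potential outcome into an observed conditional, e.g.\ $\Pr((X_i)_{\ol x_i}=1)=\Pr(X_i=1\mid\ol X_i=\ol x_i)$. Each joint probability of nested monotone potential outcomes, such as $\Pr(Y_{x'}=1,Y_{x^*}=0)$, equals a difference of marginals, here $\Pr(Y=1\mid x')-\Pr(Y=1\mid x^*)$. For the $X_i$ factors, a monotone triple $(X_i)_{\ol x^*_i}\le(X_i)_{\ol x'_i}\le(X_i)_{\ol x_i}$ with prescribed values gives the case expression in the last product; enumerating $x_i,x'_i,x^*_i\in\{0,1\}$ should reproduce exactly the polynomial coefficients shown. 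Finally, dividing by $\Pr(X=x,Y=1)$, which factors the same way, yields the ratios with denominators $\Pr(Y=1\mid X=x)$ and $\Pr(X_i=1\mid\ol X_i=\ol x_i)$.

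I expect the main obstacle to be the first reduction step: proving rigorously that the full conjunction over all pairs $x'_{\ol\Kc\cup\hat\Kc}\ne x''_{\ol\Kc\cup\hat\Kc}$ in \eqref{eq:R-indicator}, combined with $\Cc=1$, is equivalent to the single-path conditions indexed by $(x',x^*)$, and that these cells are disjoint. My first attempt would be an induction along the topological order, using composition to propagate interventions and monotonicity to sandwich every intermediate intervention between the extreme ones. I would also check separately that the roles of $\ol\Kc$ (held at observed $x_{\ol\Kc}$) and $\hat\Kc$ (free) in the index set are correct.
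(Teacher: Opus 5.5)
Your skeleton is the paper's proof: collapse $\Cc_{X_\Kc}^Y$ and $\Rc_{X_\Kc}^Y$ by monotonicity, partition by the values of intermediate potential outcomes, use composition to rewrite nested counterfactuals as $Y_{x'}$ and $Y_{x^*}$, factor by cross-world independence, write monotone joints as differences of marginals, and identify by no confounding. Your three bullets and your index set are exactly the paper's factors.

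The problem is the meaning you attach in words to $x'$ and to the root condition. Computing from those descriptions would give a different formula.
\begin{itemize}
\item The zero intervention acts on all of $X_{\ol\Kc\cup\hat\Kc}$, \emph{including} $\ol\Kc$; that is why $x''_{\ol\Kc}=0$ in the $\Kc$-factor. It is not ``$0$ on $\hat\Kc$ while $\ol\Kc$ stays observed''.
\item $X_\Kc$ is not ``left natural'': $x'_\Kc$ is the value of $(X_\Kc)_{X_{\ol\Kc\cup\hat\Kc}=0}$. Then $x'_{\ol\Kc}=x_{\ol\Kc}$ only because $X_{\ol\Kc}$ precede $X_\Kc$ and are unaffected by the intervention $X_\Kc=x'_\Kc$. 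Also, $x'_{\hat\Kc\cup\ul\Kc}$ is the natural response $(X_{\hat\Kc\cup\ul\Kc})_{X_\Kc=x'_\Kc}$, not $0$.
\item The root condition is not ``ancestors set to $0$ do not move $X_\Kc$''. The sum runs over all $x'_\Kc\preceq x_\Kc$, and the root condition is carried by the factor $Y_{x'}=1$. For example, take $\Kc=\{1,3\}$, $X_1,X_2$ exogenous, $X_3=X_2$, $Y=X_1\lor X_3$ and $x=(1,1,1)$. Setting $X_2=0$ moves $X_3$ to $0$, yet $\Rc_{X_\Kc}^Y=1$. The reading `$X_k$ does not move' is valid only for singleton $\Kc$ on $\{\Cc_{X_k}^Y=1\}$.
\end{itemize}

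With this corrected, your `main obstacle' needs no induction along the order. By Assumption~\ref{ass:mono} for $i\ge\min\Kc$, the maps $u\mapsto Y_{(X_\Kc)_u}$ and $v\mapsto Y_{X_\Kc=v}$ are monotone, and $u=x_{\ol\Kc\cup\hat\Kc}$ returns $Y$ by consistency. Hence on $\{X=x,Y=1\}$ the event $\{\Cc_{X_\Kc}^Y=1,\Rc_{X_\Kc}^Y=1\}$ is exactly $\{Y_{X_\Kc=0}=0,\ Y_{(X_\Kc)_{X_{\ol\Kc\cup\hat\Kc}=0}}=1\}$. This is the paper's reduction, and it is one line. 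The cells arise by successively fixing the values of $(X_\Kc)_{X_{\ol\Kc\cup\hat\Kc}=0}$, then $(X_{\hat\Kc\cup\ul\Kc})_{X_\Kc=x'_\Kc}$, then $(X_{\hat\Kc\cup\ul\Kc})_{X_\Kc=0}$, so they are disjoint by construction. After that, monotonicity is needed only to restrict the sum to $x^*\preceq x'\preceq x$ and to evaluate each factor, as you describe.
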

Theorem~\ref{thm:PRC} shows that $\PRC(X_{\Kc}\Rightarrow Y \mid E=e)$ can be expressed as a weighted sum of $\PRC(X_{\Kc}\Rightarrow Y \mid X=x,Y=y)$, which is identified under Assumptions \ref{ass:mono} and \ref{ass:no-conf}. 
Proof can be found in the supplementary material~\ref{proof:thm_PRC}. 
As a corollary, the probability of no root cause is also identifiable based on the equation \eqref{eq:PRC-Y-X}. 
\begin{corollary}
	The probability of no root cause is identifiable under Assumptions \ref{ass:mono}, \ref{ass:no-conf}, and \ref{ass:cross-world}. 
	\begin{align*}
		\PRC(\emptyset \Rightarrow Y\mid E=e) = 1- \PRC(X\Rightarrow Y\mid E=e), 
	\end{align*}
	where the identification formula of $\PRC(X\Rightarrow Y\mid E=e)$ is given by Theorem~\ref{thm:PRC}. 
\end{corollary}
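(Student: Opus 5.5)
The plan is to prove the displayed identity pointwise, at the level of individual indicators, and then obtain identifiability directly from Theorem~\ref{thm:PRC} with the candidate set $\Kc=\{1,\dots,p\}$. The identity itself is equation~\eqref{eq:PRC-Y-X}, stated in the text as a ``simple corollary''; I will verify it from the definitions, because it is the substantive content of the statement.

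\textbf{Step 1: $X$ as a candidate set.} Take $\Kc=\{1,\dots,p\}$, so $X_\Kc=X$. Then $\ol\Kc=\{j: j<1\}=\emptyset$ and $\hat\Kc=\emptyset$, since there is no index strictly between $1$ and $p$ outside $\Kc$. Hence $X_{\ol\Kc\cup\hat\Kc}=\emptyset$. By the convention stated after \eqref{eq:R-indicator}, $\Rc_{X}^{Y}=1$ for every $\omega$. This is also the case $k=p$ of Property~\ref{ppt:PRC}\ref{enu:1-k}. Therefore
\begin{align*}
\PRC(X\Rightarrow Y\mid E=e)=\Pr\bigl(\Cc_X^Y=1\mid E=e\bigr),
\qquad
\Cc_X^Y=\bigvee_{x'\neq x''}\{Y_{x'}\neq Y_{x''}\}.
\end{align*}

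\textbf{Step 2: the no-root-cause event is the complement.} By \eqref{eq:C-Y}, $\Cc_Y^Y=1$ identically. By \eqref{eq:R-Y}, $\Rc_Y^Y=\bigwedge_{x'\neq x''}\{Y_{x'}=Y_{x''}\}$. The equality $Y_{Y_{x'}}=Y_{x'}$ used in \eqref{eq:R-Y} is justified as follows: setting $Y$ to the value $Y_{x'}(\omega)$ returns that value. This is the trivial case of consistency and composition, applied with the intervened variable equal to the outcome.

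By De Morgan's law, $\Rc_Y^Y=1-\Cc_X^Y$ for every $\omega$, where both sides range over the same pairs $x'\neq x''$ of values of $X$. Hence
\begin{align*}
\{\Cc_Y^Y=1,\Rc_Y^Y=1\}=\{\Cc_X^Y=0\}=\{\Cc_X^Y=1,\Rc_X^Y=1\}^{c}.
\end{align*}
Taking conditional probabilities given $E=e$ yields $\PRC(\emptyset\Rightarrow Y\mid E=e)=1-\PRC(X\Rightarrow Y\mid E=e)$. This step requires no assumptions beyond those in force for the definitions.

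\textbf{Step 3: identifiability.} Apply the first display of Theorem~\ref{thm:PRC}, which writes $\PRC(X\Rightarrow Y\mid E=e)$ as a mixture of $\PRC(X\Rightarrow Y\mid X=x,Y=y)$ weighted by $\Pr(X=x,Y=y\mid E=e)$. The weights are functions of the observed distribution.

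For $y=1$, formula \eqref{eq:PRC-id} applies with $\ol\Kc=\hat\Kc=\ul\Kc=\emptyset$. The last product is then empty, the constraint reduces to $x^*=0\preceq x'\preceq x$, and $x''=x'$. So only observed conditionals $\Pr(Y=1\mid\cdot)$ and $\Pr(X_i=1\mid \ol X_i=\cdot)$ appear. For $y=0$, use the reflection $\tilde X=1-X$, $\tilde Y=1-Y$ stated in Theorem~\ref{thm:PRC}; under it Assumptions~\ref{ass:mono}--\ref{ass:cross-world} are preserved.

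Thus $\PRC(X\Rightarrow Y\mid E=e)$ is identified under Assumptions~\ref{ass:mono}, \ref{ass:no-conf} and \ref{ass:cross-world}, and by Step 2 so is $\PRC(\emptyset\Rightarrow Y\mid E=e)$.

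\textbf{Main obstacle.} The only delicate point is Step 2. One must check that the conjunction in $\Rc_Y^Y$ and the disjunction in $\Cc_X^Y$ range over exactly the same pairs of interventions on the full vector $X$, so that De Morgan's law applies exactly. One must also confirm that the nested potential outcome $Y_{Y_{x'}}$ legitimately reduces to $Y_{x'}$. Once these two facts are in place, the rest is bookkeeping.
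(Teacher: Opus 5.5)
Your proof is correct and takes the paper's route: the complement identity \eqref{eq:PRC-Y-X}, obtained from the definitions using $\Rc_X^Y=1$ because $X_{\ol\Kc\cup\hat\Kc}=\emptyset$, combined with Theorem~\ref{thm:PRC} at $\Kc=\{1,\dots,p\}$; the paper states this without further detail, and your write-up simply spells it out. One small correction: the reduction $Y_{Y_{x'}}=Y_{x'}$ is an instance of effectiveness (intervening on $Y$ sets it to the intervened value), not of consistency or composition, but nothing depends on this because the reduction is already built into the paper's definition \eqref{eq:R-Y}.
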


The identification formula \eqref{eq:PRC-id} of PRC is similar but more complex than PostTCE~\citep{lu2023Evaluating}, and we have the following properties about their relationships.

\begin{property}\label{ppt:PRC-TCE}\ 
	\begin{enumerate}[(a)]
		\item When $\Kc=\{k\}$, if assumptions \ref{ass:mono}, \ref{ass:no-conf} and \ref{ass:cross-world} hold, we have the following equation: 
			\begin{equation*}\begin{aligned}
			    &\PRC(X_{k}\Rightarrow Y \mid X=x,Y=1) 
				\\
                =& x_{k}\frac{\Pr(X_{k}=1\mid \ol X_{k}=0)}{\Pr(X_{k}=1\mid \ol X_{k}=\ol x_{k})} \cdot \operatorname{PostTCE}(X_{k}\Rightarrow Y\mid X=x,Y=1)
				. 
			\end{aligned}
			\end{equation*}
		\item When $\Kc=\{1,\dots,k\}$, if assumptions \ref{ass:mono}, \ref{ass:no-conf} and \ref{ass:cross-world} hold, we have the following equation: 
			\begin{align*}
				\PRC(X_{\Kc}\Rightarrow Y\mid E=e) = \operatorname{PostTCE}(X_{\Kc}\Rightarrow Y \mid E=e) 
				. 
			\end{align*}
	\end{enumerate}
\end{property}
In the special example $p=2$ with $\{E=e\}=\{(X_{1},X_{2},Y)=(1,1,1)\}$, the two measures are 
related as follows:
\begin{align*}
	\PRC(X_{1}\Rightarrow Y\mid E=e) &= \operatorname{PostTCE}(X_{1}\Rightarrow Y \mid E=e), 
	\\
	\PRC(X_{2}\Rightarrow Y\mid E=e) &= \frac{\Pr(X_{2}=1\mid X_{1}=0)}{\Pr(X_{2}=1\mid X_{1}=1)} \operatorname{PostTCE}(X_{2}\Rightarrow Y \mid E=e). 
\end{align*}
More generally, for any variable that is a root node in the causal diagram, PRC coincides with PostTCE.  For non-root variables, however, PostTCE tends to overestimate their importance when caring about the root cause, which may yield rankings inconsistent with what root cause attribution requires.

\section{Numerical Examples}\label{sec:numerical-examples}

\subsection{Risk factors for hypertension}
This section employs an example of hypertension from \cite{tan2016Introduction} to illustrate the proposed approach. 
Figure \ref{fig:blood-pressure} illustrates the causal network and conditional probabilities, which model the relationship between a patient's hypertension and some risk factors, replicating \cite[Figure 1]{lu2023Evaluating}. 
Here, $X_1=1$ denotes no daily exercise, $X_2=1$ denotes an unhealthy diet, $X_3=1$ denotes the occurrence of heartburn, $X_4=1$ denotes the occurrence of heart disease, $X_5=1$ denotes the occurrence of chest pain, and $Y=1$ denotes the occurrence of high blood pressure. 
The variables are assumed to satisfy Assumptions \ref{ass:mono} and \ref{ass:no-conf}. 
While we use a fully specified Bayesian network to provide the ground truth in this illustrative example, we emphasize that PRC computation requires only the joint distribution and the topological order.

\begin{figure}[!ht]
	\centering
	\includegraphics[width = 0.6\linewidth]{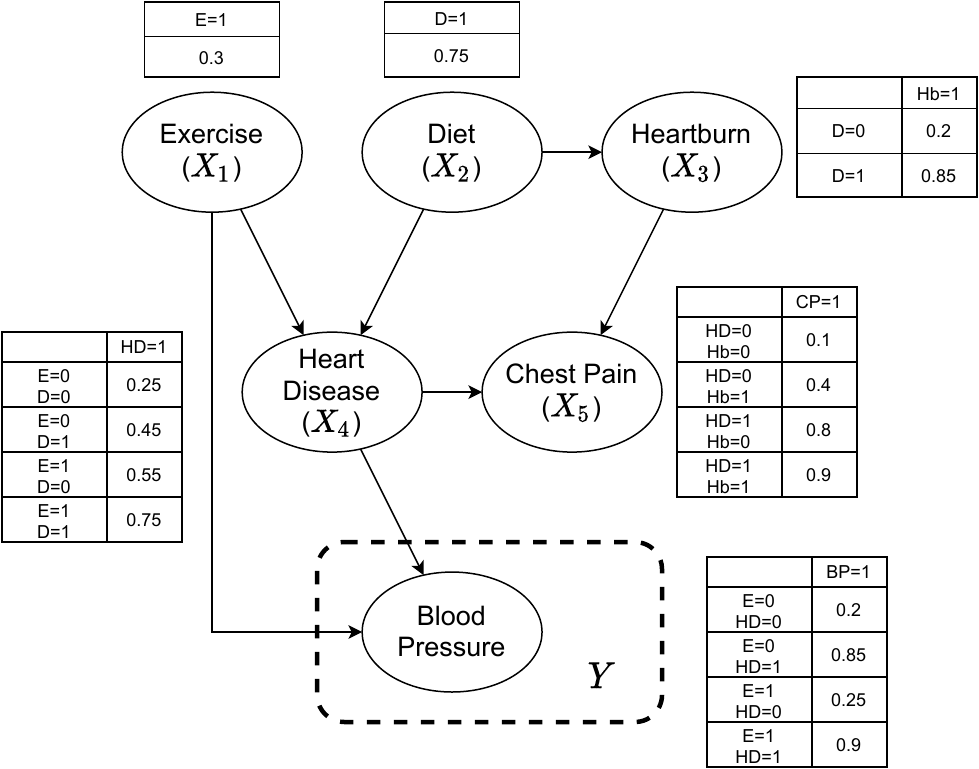}
	\caption{A causal network representing hypertension and its risk factors.\label{fig:blood-pressure}}
\end{figure}

\begin{table}[!ht]
	\centering
	\caption{Results of PRC and PostTCE under the evidence $E=(X,Y)=(1,1,1,1,1,1)$. \label{tab:ex1-1}}
	\begin{tabular}{cccccccc}
		\hline
											 & $X_1$  & $X_2$  & $X_3$  & $X_4$  & $X_5$  & $(X_{1},X_{2})$ & $\emptyset$\\ \hline
		$\PRC(X_{\Kc}\Rightarrow Y\mid E=e)$                 & 0.3444 & 0.1926 & 0      & 0.2407 & 0    & 0.5370 & 0.2222  \\
		$PostTCE(X_{\Kc}\Rightarrow Y\mid E=e)$              & 0.3444 & 0.1926 & 0      & 0.7222 & 0    & 0.5370 & \# \\
		\hline 
	\end{tabular}
\end{table}

Initially, consider the scenario where all the values of variables are observed as $x=(1,1,1,1,1)$ and $y=1$. 
See Table \ref{tab:ex1-1} for the results. 
All these causal-based methods rule out $X_3$ and $X_5$ as the cause, adhering to the principle of causality. 
Notably, PRC identifies $(X_{1},X_{2})$ as the root cause of $Y$ with the highest probability of $0.5370$, placing high emphasis on the root nodes of the causal network compared to PostTCE. 
PRC also suggests a probability of $0.2222$ that no root cause exists among $(X_{1},\dots,X_{5})$. 
PostTCE, by contrast, selects $X_4$ as the primary cause of $Y$ with a value of $0.7222$. 
Even though PostTCE coincides with PRC for root nodes, it assigns a higher value to $X_4$ than to $X_2$, due to the proximity of $X_4$ to $Y$ on the causal network. 
This tendency may lead to an overemphasis on proximate causes, while PRC can integrate the entire causal pathway, providing a more holistic view of the root causes of $Y$.

\begin{table}[!ht]
	\centering
	\caption{Results of PRC, PostTCE, and posterior probability under the evidence $Y=1$}
	\label{tab:ex1-2}
	\begin{tabular}{cccccccc}
		\hline
		& $X_1$  & $X_2$  & $X_3$  & $X_4$  & $X_5$ & $(X_{1},X_{2})$ & $\emptyset$ \\\hline
		$\PRC(X_{\Kc}\Rightarrow Y\mid E=e)$            & 0.1378 & 0.1828 & 0      & 0.3046 & 0   & 0.3205 & 0.3749   \\
		$PostTCE(X_{\Kc}\Rightarrow Y\mid E=e)$         & 0.1378 & 0.1828 & 0      & 0.5970 & 0    & 0.3205 & \#  \\
		$\Pr(X_{\Kc}=1\mid E=e)$                        & 0.3964 & 0.7957 & 0.7172 & 0.8004 & 0.7575 & 0.8811 & \#   \\\hline
	\end{tabular}
\end{table}

Consider a medical diagnosis scenario where only the symptom $Y$ is observed as $y=1$. 
See Table \ref{tab:ex1-2} for results. 
The posterior probability of $X_3$ and $X_5$ are $0.7172$ and $0.7575$ respectively, both exceeding $\Pr(X_{1}=1\mid Y=1)=0.3964$, despite their lack of a causal relationship with $Y$. 
The PostTCE of $X_4$ is $0.5970$, which is significantly higher than that of $X_2$ because $X_4$ blocks the pathway from $X_2$ to $Y$. 
The ranking of PRC for each variable is consistent with those of PostTCE: $X_4,X_2,X_1,X_3,X_5$.
However, PRC assigns a higher probability to the variable set $(X_1,X_2)$ than $X_{4}$, reflecting the influence of upstream variables. 
Moreover, PRC suggests a probability of $0.3749$ that high blood pressure ($Y=1$) is self-caused or unlisted residual factors.

Comparing the two evidence scenarios illustrates that root causes can change with the observed evidence. 
The structure of the causal network alone is not sufficient to identify the root cause. 

\subsection{Risk factors for harmonic gear drive device }
In this section, we use an example from \cite{yang2021Reliability}. 
Figure \ref{fig:HGD} illustrates the causal network and conditional probabilities, which are identical to \cite[Fig. 14]{yang2021Reliability} except for some modified conditional probabilities for presentational purposes.
Figure \ref{fig:HGD} shows the system configuration of a harmonic gear drive device, which comprises two main parts, i.e., the wave generation subsystem and the gear transmission subsystem, and can be further decomposed into five components. 
We code each variable as $1$ for failure and $0$ otherwise. To illustrate the proposed approach, we assume the variables satisfy the assumptions \ref{ass:mono} and \ref{ass:no-conf}.

\begin{figure}[!ht]
	\centering
	\includegraphics[width = 0.65\linewidth]{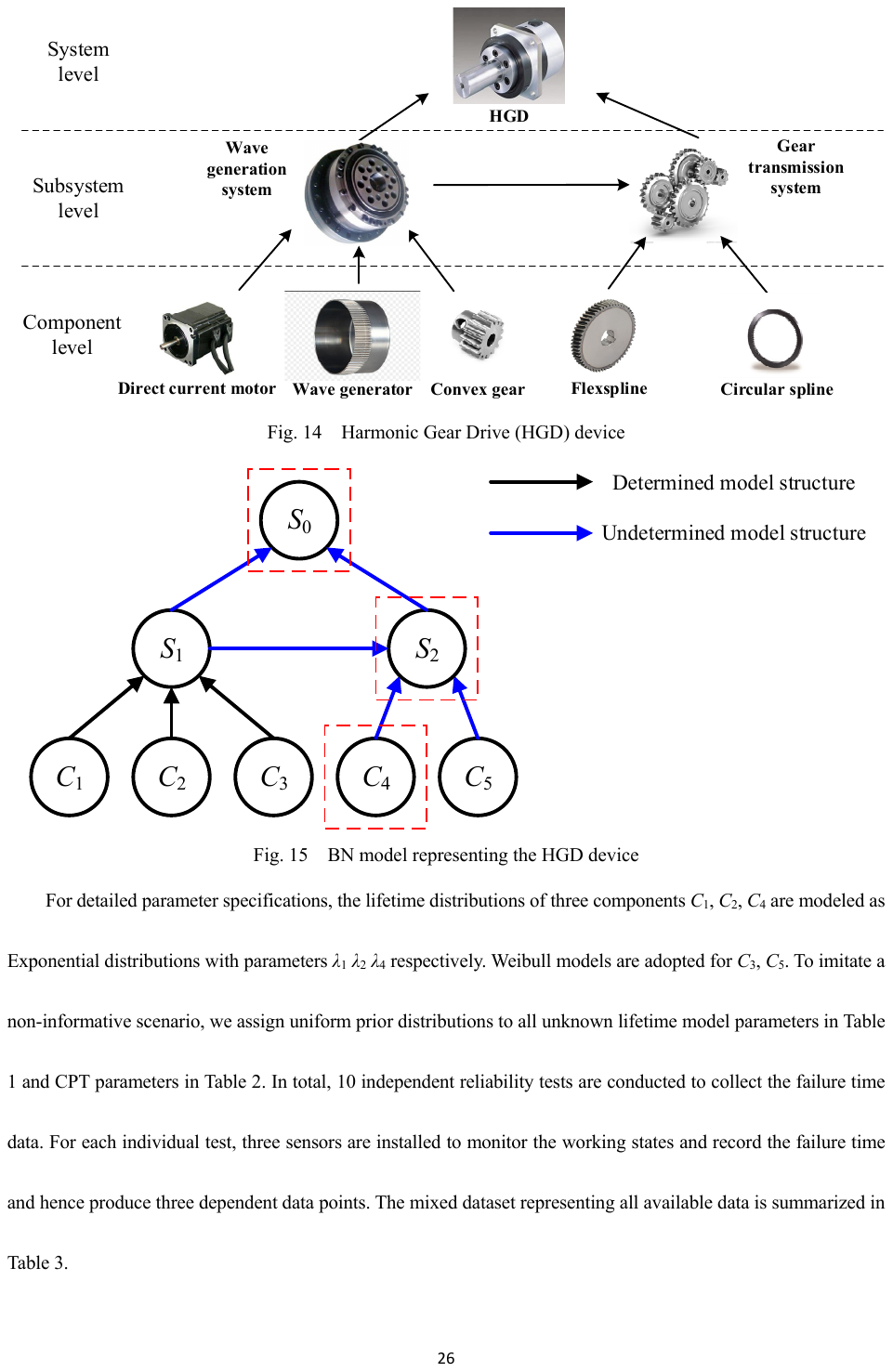}
	\includegraphics[width = 0.65\linewidth]{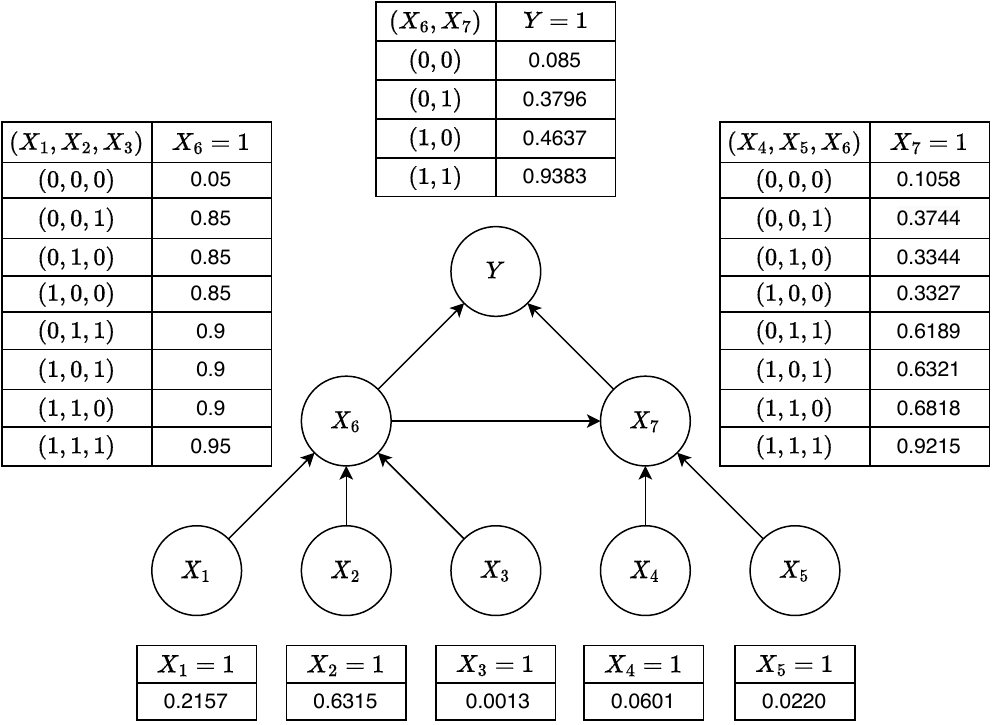}
	\caption{A causal network for the harmonic gear drive device}
	\label{fig:HGD}
\end{figure}

\begin{table}[!ht]
	\centering
	\caption{Results of PRC, PostTCE, and posterior probability under the evidence $Y=1$} 
	\label{tab:ex2-1}
	\begin{tabular}{cccccc}
		\hline
		& $X_1$  & $X_2$  & $X_3$  & $X_4$  & $X_5$  \\\hline
		$\PRC(X_{\Kc}\Rightarrow Y\mid E=e)$    & 0.0803 & 0.4595 & 0.0004 & 0.0045 & 0.0015 \\
		$PostTCE(X_{\Kc}\Rightarrow Y\mid E=e)$ & 0.0803 & 0.4595 & 0.0004 & 0.0045 & 0.0015 \\
		$\Pr(\lor_{k\in \Kc} X_k=1\mid E=e)$                & 0.2787 & 0.8008 & 0.0017 & 0.0644 & 0.0235 \\\hline 
		& $X_6$  & $X_7$  & $(X_1,X_{2},X_3)$  & $(X_{4},X_{5})$  & $\emptyset$  \\\hline
		$\PRC(X_{\Kc}\Rightarrow Y\mid E=e)$    & 0.0571 & 0.1032 & 0.6574 & 0.0060 & 0.2035 \\
		$PostTCE(X_{\Kc}\Rightarrow Y\mid E=e)$ & 0.7145 & 0.2290 & 0.6574 & 0.0060 & \#  \\
		$\Pr(\lor_{k\in \Kc} X_k=1\mid E=e)$                & 0.8932 & 0.4317 & 0.9011 & 0.0863 & \# \\ \hline
	\end{tabular}
\end{table}

Table \ref{tab:ex2-1} presents the results of PRC, PostTCE, and posterior probability under evidence $Y=1$. 
For the root nodes $X_{1},\dots,X_{5}$, the values of PRC are consistent with those of PostTCE. 
For the non-root variables $X_{6}$ and $X_{7}$, however, the two measures diverge substantially.  PostTCE assigns $X_{6}$ a value of $0.7145$, far exceeding its PRC of $0.0571$. 
Because of the strong causal dependence of $X_{6}$ on $X_{2}$, the failure of $X_{6}$ is largely attributable to $X_{2}$ rather than to $X_{6}$ itself, and PRC correctly reflects this by ranking $X_{2}$ (and the set $(X_{1},X_{2},X_{3})$) well above $X_{6}$, whereas PostTCE assigns $X_{6}$ a much higher value than $X_{2}$, mirroring the Property~\ref{ppt:TCE}.

\begin{table}[!ht]
	\centering
	\caption{Results of PRC, PostTCE, and posterior probability given the evidence $(X_1, X_2, Y)=(0,0,1)$} 
	\label{tab:ex2-2}
	\begin{tabular}{cccccc}
		\hline
		& $X_1$  & $X_2$  & $X_3$  & $X_4$  & $X_5$  \\\hline
		$\PRC(X_{\Kc}\Rightarrow Y\mid E=e)$    & 0     & 0     & 0.0035 & 0.0155 & 0.0057 \\
		$PostTCE(X_{\Kc}\Rightarrow Y\mid E=e)$ & 0     & 0     & 0.0035  & 0.0155 & 0.0057  \\
		$\Pr(\lor_{k\in \Kc} X_k=1\mid E=e)$    & 0     & 0     & 0.0047  & 0.0747 & 0.0276 \\\hline
		& $X_6$  & $X_7$  & $(X_1,X_{2},X_3)$  & $(X_{4},X_{5})$  & $\emptyset$  \\\hline
		$\PRC(X_{\Kc}\Rightarrow Y\mid E=e)$    & 0.1660 & 0.2238 & 0.0035 & 0.0212 & 0.5919 \\
		$PostTCE(X_{\Kc}\Rightarrow Y\mid E=e)$ & 0.1695 & 0.2734 & 0.0035 & 0.0212 & \#     \\
		$\Pr(\lor_{k\in \Kc} X_k=1\mid E=e)$    & 0.2119 & 0.3847 & 0.0047 & 0.1002 & \#     \\ \hline
	\end{tabular}
\end{table}

Consider a scenario in which we observe $(X_1, X_2)=(0,0)$ and $Y=1$. Table~\ref{tab:ex2-2} reports the corresponding results.  
In this case, it is not wise to select root nodes as root causes based solely on network structure, as $X_{3},X_{4},X_{5}$ are rare. 
The values of PRC closely align with those of PostTCE and posterior probability, yielding the same ranking $(X_{7},X_{6},X_{4},X_{5},X_{3},X_{1},X_{2})$ for the root cause. 
However, PRC assigns a probability of $0.5919$ to the event that none of $X_{1},\dots,X_{7}$ is the root cause of the system failure, suggesting that the root cause is $Y$ itself.

\section{Discussion}\label{sec:discussion}

We have proposed a causal framework for root cause attribution that addresses a conceptual gap in both the root cause analysis literature, which targets the right question but lacks a causal foundation, and the causal attribution literature, which provides rigorous causal answers but to a broader question. 
At its core is a formal, individual-level 
definition of a root cause within the potential outcomes framework, combining the notion of an individual cause with a {root condition} motivated by mediation analysis. 
The induced measure, the probability of root cause, is identifiable under standard assumptions, and admits an explicit identification formula.

Several directions remain open for future investigation. 
Although the identification formula of PRC was developed for binary variables, the underlying notion of root causes is considerably more general and can extend naturally to continuous variables and potentially non-Euclidean variables. 
Extending the identification theory and estimation methodology to these settings is a substantive and practically important challenge. 
The assumptions required for point identification, including monotonicity, are strong and, in general, untestable from observed data alone, which may limit the applicability of PRC in practice. 
Establishing sharp partial identification results and bounding strategies under weaker conditions is therefore an important direction for future work; related approaches in the literature on probabilities of causation \citep{joffe2001Using, dawid2022Bounding, zhang2022Partial} provide a natural starting point. 
Finally, the development of efficient semiparametric or nonparametric estimators for PRC, together with rigorous uncertainty quantification, represents another avenue deserving further attention \citep{cuellar2022Causes,tian2025Semiparametric}.

\bibliography{ref-PRC} 

\newpage 
\appendix

\section{Proof of Property~\ref{ppt:TCE}}\label{proof:ppt_TCE}
\begin{proof}
	\begin{align*}
		&\quad PostTCE(X_k\Rightarrow Y\mid E=e)
		\\
		&= \EE\bigl(Y_{X_k=1}-Y_{X_k=0}\mid E=e\bigr) 
		\\
		&= \sum_{x_i^*=0}^{1}\Pr\bigl\{Y_{X_k=1}=1,(X_i)_{X_k=1}=x_i^*\mid E=e\bigr\} -  \sum_{x_i^*=0}^{1}\Pr\bigl\{Y_{X_k=0}=1,(X_i)_{X_k=0}=x_i^*\mid E=e\bigr\} 
		\\
		&= \sum_{x_i^*=0}^{1}\Pr\bigl\{Y_{X_k=1,X_i=x_i^*}=1,(X_i)_{X_k=1}=x_i^*\mid E=e\bigr\} -  \sum_{x_i^*=0}^{1}\Pr\bigl\{Y_{X_k=0,X_i=x_i^*}=1,(X_i)_{X_k=0}=x_i^*\mid E=e\bigr\} 
		. 
	\end{align*}
		Because of $X_i$ blocking all the causal path from $X_k$ to $Y$ and the assumption of no confounding and cross-world independence, for $a=0,1$, we have 
	\begin{align*}
		&\quad \Pr\bigl\{Y_{X_k=a,X_i=x_i^*}=1,(X_i)_{X_k=a}=x_i^*\mid E=e\bigr\}
		\\
		&= \Pr\bigl\{Y_{X_i=x_i^*}=1,(X_i)_{X_k=a}=x_i^*\mid E=e\bigr\}
		\\
		&= \Pr\bigl(Y_{X_i=x_i^*}=1\mid E=e\bigr) \Pr\bigl\{(X_i)_{X_k=a}=x_i^*\mid E=e\bigr\}
	\end{align*}
	Thus, 
	\begin{align*} 
		&\quad PostTCE(X_k\Rightarrow Y\mid E=e)
		\\
		&= \sum_{x_i^*=0}^{1}\Pr\bigl(Y_{X_i=x_i^*}=1\mid E=e\bigr) \Pr\bigl\{(X_i)_{X_k=1}=x_i^*\mid E=e\bigr\} -  \sum_{x_i^*=0}^{1}\Pr\bigl(Y_{X_i=x_i^*}=1\mid E=e\bigr) \Pr\bigl\{(X_i)_{X_k=0}=x_i^*\mid E=e \bigr\}
		\\
		&= \sum_{x_i^*=0}^{1}\Pr\bigl(Y_{X_i=x_i^*}=1\mid E=e\bigr) \Bigl[\Pr\bigl\{(X_i)_{X_k=1}=x_i^*\mid E=e\bigr\} - \Pr\bigl\{(X_i)_{X_k=0}=x_i^*\mid E=e\bigr\} \Bigr]
		\\
		&= \Bigl\{\Pr\bigl(Y_{X_i=1}=1\mid E=e\bigr) - \Pr\bigl(Y_{X_i=1}=0\mid E=e\bigr)\Bigr\} \cdot \Bigl[\Pr\bigl\{(X_i)_{X_k=1}=1\mid E=e\bigr\} - \Pr\bigl\{(X_i)_{X_k=0}=1\mid E=e\bigr\}\Bigr]
		\\
		&= PostTCE(X_i\Rightarrow Y\mid E=e) \cdot \EE\bigl\{(X_i)_{X_k=1} - (X_i)_{X_k=0}\mid E=e\bigr\} 
		. 
	\end{align*}
	Because $-1\leq \EE\bigl\{(X_i)_{X_k=1} - (X_i)_{X_k=0}\mid E=e\bigr\} \leq 1$, we have 
	\begin{align*} 
		\bigl| PostTCE(X_k\Rightarrow Y\mid E=e) \bigr| \leq \bigl| PostTCE(X_i\Rightarrow Y\mid E=e) \bigr|
		. 
	\end{align*}
\end{proof}

\section{Proof of Property~\ref{ppt:PRC}}\label{proof:ppt_PRC}

\begin{proof}\ 
	\begin{enumerate}[(a)]
		\item If $\Cc_{X_{K}}^{Y}=1$, the inequality holds. 
			If $\Cc_{X_{K}}^{Y}=0$, then for all $x_{K}'\neq x_{K}''$, we have $Y_{x_{K}'}=Y_{x_{K}''}$, therefore $\Rc_{X_{K}}^{Y}=1$. 

			We give an couterexample here to show that the root condition does not satisfy the monotonicity. 
			\begin{example}\label{ex:RC-non-monotonicity}
				The data generating process of binary variables $(X_{1},X_{2},X_{3},Y)$ is as follows: 
				\begin{align*}\left\{\begin{aligned}
					X_{1} &= f_{1}(\epsilon_{1}), \\
					X_{2} &= f_{2}(X_{1})=X_{1}, \\
					X_{3} &= f_{3}(\epsilon_{3}), \\
					Y &= f(X) = X_{2}X_{3} + (1-X_{2})(1-X_{3}),
				\end{aligned}\right.\end{align*}
				where $(\epsilon_{1},\epsilon_{3})$ are independent random variables. 
				Then, we can simply verified the following results:
				\begin{enumerate}[(a)]
					\item $X_{3}$ is the RC of $Y$. \par 
						$Y_{X_{3}=1}=X_{2}$, $Y_{X_{3}=0}=1-X_{2}$, and $\Cc_{X_{3}}^{Y}= \lor\{Y_{X_{3}=1}\neq Y_{X_{3}=0}\}=1$, then $X_{3}$ is the IC of $Y$. $\Rc_{X_{3}}^{Y} = \land_{x_{\{1,2\}}'\neq x_{\{1,2\}}''} \{Y_{(X_{3})_{x_{\{1,2\}}'}}=Y_{(X_{3})_{x_{\{1,2\}}''}}\} = \land\{Y_{X_{3}}=Y_{X_{3}}\}=1$, then $X_{3}$ satisfies the root condition.
					\item $X_{\{2,3\}}$ is not the RC of $Y$. \par 
						$\Cc_{X_{\{2,3\}}}^{Y}\geq \Cc_{X_{3}}=1$, then $X_{2,3}$ is the IC of $Y$. $Y_{(X_{\{2,3\}})_{X_{1}=1}}=X_{3}$, $Y_{(X_{\{2,3\}})_{X_{1}=0}}=1-X_{3}$, $\Rc_{X_{\{2,3\}}}^{Y} = \land \{Y_{(X_{\{2,3\}})_{X_{1}=1}}=Y_{(X_{\{2,3\}})_{X_{1}=0}}\}=0$, then $X_{2,3}$ does not satisfy the root condition. 
				\end{enumerate} 
			\end{example}
		\item 
			\begin{align*}
				\Cc_{X_{K}}^{Y} 
				= \bigvee_{x_{K}'\neq x_{K}''} \{Y_{x_{K}'}\neq Y_{x_{K}''}\}
				\geq \max_{k\in K} \bigvee_{x_{k}'\neq x_{k}''} \bigl\{Y_{x_{k}',(X_{K\setminus \{k\}})_{x_{k}'}}\neq Y_{x_{k}'',(X_{K\setminus \{k\}})_{x_{k}''}}\bigr\}
				= \bigvee_{k\in K} \Cc_{X_{k}}^{Y}.
			\end{align*}
		\item $\bigl\{\Cc_{X_{k}}^{Y}= 1\bigr\}  = \{Y_{X_{k}=0}\neq Y_{X_{k}=1}\}$ and 
			\begin{align*}
				\bigl\{\Cc_{X_{k}}^{Y} = 1, \Rc_{X_{k}}^{Y}=1\bigr\} 
				&= \{Y_{X_{k}=0}\neq Y_{X_{k}=1}\} \land \bigl\{ \forall\ \ol x_{k}'\neq \ol x_{k}'', Y_{(X_{k})_{\ol x_{k}'}}= Y_{(X_{k})_{\ol x_{k}''}}\bigr\}
				\\
				&= \{Y_{X_{k}=0}\neq Y_{X_{k}=1}\} \land \bigl\{\forall\ \ol x_{k}'\neq \ol x_{k}'', (X_{k})_{\ol x_{k}'}=(X_{k})_{\ol x_{k}''}\bigr\}
				\\
				&= \{Y_{X_{k}=0}\neq Y_{X_{k}=1}\} \land \bigl\{L_{k}=1\bigr\}
				\\
				&= \{Y_{X_{k}=0}\neq Y_{X_{k}=1}, L_{k}=1\}. 
			\end{align*}
		\item $K=\{1,\dots,k\}$ then $\ol X_{K}=\emptyset$ and $\Rc_{X_{K}}^{Y}=1$. 
			\begin{align*}
				\PRC(X_{K}\Rightarrow Y\mid E=e) 
				&= \Pr\bigl(\Cc_{X_{K}}^{Y}=1\mid E=e\bigr)
				\\
				&\geq \max_{\Jc\subseteq K} \Pr\bigl(\Cc_{X_{\Jc}}^{Y}=1\mid E=e\bigr)
				\\
				&\geq \max_{\Jc\subseteq K} \Pr\bigl(\Cc_{X_{\Jc}}^{Y}=1, \Rc_{X_{\Jc}}^{Y}=1\mid E=e\bigr)
				\\
				&= \max_{\Jc\subseteq K} \PRC(X_{\Jc}\Rightarrow Y\mid E=e). 
			\end{align*}
	\end{enumerate}
	\end{proof}

\section{Proof of Theorem~\ref{thm:PRC}}\label{proof:thm_PRC}
\begin{proof}
	For convenience, with a index set $K=\{k_{1},\dots, k_{r}\}$ with $1\leq k_{1}<k_{2}< \dots < k_{r}\leq p$, we denote $X_{K}=(X_{k_{1}},\dots, X_{k_{r}})$, $\ol K = \{j\mid 1\leq j<k_{1}\}$, $\ul K = \{j\mid k_{r}<j\leq p\}$ and $\widehat K=\{j\mid k_{1}<j<k_{r},j\notin I\}$. 

	Because of Assumption~\ref{ass:mono} of monotonicity, we have $Y_{X_{K}=0}\leq Y_{x_{K}'} \leq Y_{X_{K}=1}$, $(X_{K})_{X_{\ol K\cup \hat K}=0}\preceq (X_{K})_{x_{\ol K\cup \hat K}'} \preceq (X_{K})_{X_{\ol K\cup \hat K}=1}$, and $Y_{(X_{K})_{X_{\ol K\cup \hat K}=0}} \leq Y_{(X_{K})_{x_{\ol K\cup \hat K}'}} \leq Y_{(X_{K})_{X_{\ol K\cup \hat K}=1}}$. Therefore, 
	\begin{align*}
		\Cc_{X_{K}}^{Y} 
		= \bigvee_{x_{K}'\neq x_{K}''} \bigl\{Y_{x_{K}'}\neq Y_{x_{K}''}\bigr\}
		= \bigl\{Y_{X_{K}=0}\neq Y_{X_{K}=1}\bigr\} 
		= \bigl\{Y_{X_{K}=0}=0, Y_{X_{K}=1}=1\bigr\} 
		, 
	\end{align*}
	\begin{align*}
		\Rc_{X_{K}}^{Y}
		&= \bigwedge_{x_{K}'\neq x_{K}''} \bigl\{Y_{(X_{K})_{x_{\ol K\cup \hat K}'}}=Y_{(X_{K})_{x_{\ol K\cup \hat K}''}}\bigr\} 
		= \bigl\{Y_{(X_{K})_{X_{\ol K\cup \hat K}=0}}=Y_{(X_{K})_{X_{\ol K\cup \hat K}=1}}\bigr\} 
		\\
		&= \bigl\{Y_{(X_{K})_{X_{\ol K\cup \hat K}=0}}=Y_{(X_{K})_{X_{\ol K\cup \hat K}=1}}=Y_{(X_{K})_{X_{\ol K\cup \hat K}=x_{\ol K\cup \hat K}}}\bigr\} 
		. 
	\end{align*}
	Thus, 
	\begin{align*}
		&\quad \PRC(X_{K}\Rightarrow Y\mid X=x, Y=1) 
		\\
		&= \Pr\bigl(\Cc_{X_{K}}^{Y}=1,\Rc_{X_{K}}^{Y}=1\mid X=x,Y=1\bigr)
		\\
		&= \frac{1}{\Pr(X=x,Y=1)} \Pr\bigl\{Y_{X_{K}=0}\neq Y_{X_{K}=1}, Y_{(X_{K})_{X_{\ol K\cup \hat K}=0}}=Y_{(X_{K})_{X_{\ol K\cup \hat K}=1}}, X=x, Y=1\bigr\}
		\\
		&= \frac{1}{\Pr(X=x,Y=1)} \Pr\bigl\{Y_{X_{K}=0}=0, Y_{X_{K}=1}=1, Y_{(X_{K})_{X_{\ol K\cup \hat K}=0}}=Y_{(X_{K})_{X_{\ol K\cup \hat K}=1}}=Y_{(X_{K})_{x_{\ol K\cup \hat K}}}=Y=1, X=x\bigr\} 
		\\
		&= \frac{1}{\Pr(X=x,Y=1)} \Pr\bigl\{Y_{X_{K}=0}=0, Y_{(X_{K})_{X_{\ol K\cup \hat K}=0}}=1, X=x\bigr\}
		, 
	\end{align*}
	where the last equation holds because of Assumption~\ref{ass:mono} of monotonicity. Then we have 
	\begin{align*}
		\Pr(X=x,Y=1) &= \Pr(Y=1\mid X=x) \prod_{i=1}^{p} \Pr(X_{i}=x_{i}\mid \ol X_{i}=\ol x_{i}), 
	\end{align*}
	\begin{align*}
		&\quad \Pr\bigl\{Y_{X_{K}=0}=0, Y_{(X_{K})_{X_{\ol K\cup \hat K}=0}}=1, X=x\bigr\}
		\\
		&= \sum_{x_{K}'\preceq x_{K}}\Pr\bigl\{Y_{X_{K}=0}=0, Y_{(X_{K})_{X_{\ol K\cup \hat K}=0}}=1, X=x, (X_{K})_{X_{\ol K\cup \hat K}=0}=x_{K}'\bigr\}
		\\
		&= \sum_{x_{K}'\preceq x_{K}}\Pr\bigl\{Y_{X_{K}=0}=0, Y_{x_{K}'}=1, X=x, (X_{K})_{X_{\ol K\cup \hat K}=0}=x_{K}'\bigr\}
		\\
		&= \sum_{x_{\hat K \cup \ul K}^{*}\preceq x_{\hat K \cup \ul K}} \sum_{x_{K\cup \hat K \cup \ul K}'\preceq x_{K\cup \hat K \cup \ul K}} \Pr\Bigl\{Y_{x_{\ol K}, X_{K}=0, x_{\hat K \cup \ul K}^{*}}=0, Y_{x_{\ol K},x_{K}',x_{\hat K \cup \ul K}'}=1, X=x, 
		\\&\qquad\qquad\qquad\qquad\qquad\qquad\qquad\qquad (X_{\hat K \cup \ul K})_{X_{K}=0} = x_{\hat K \cup \ul K}^{*}, (X_{K})_{X_{\ol K\cup \hat K}=0}=x_{K}', (X_{\hat K\cup \ul K})_{x_{K}'}=x_{\hat K\cup \ul K}' \Bigr\}
		\\
		&= \sum_{\substack{x^*\preceq x' \preceq x: \\ x_{\ol K}^{*}=x_{\ol K}'=x_{\ol K},\ x_{K}^{*}=0}} \Pr\Bigl\{Y_{x^{*}}=0, Y_{x'}=1, X=x, (X_{\hat K \cup \ul K})_{X_{K}=0} = x_{\hat K \cup \ul K}^{*}, (X_{K})_{X_{\ol K\cup \hat K}=0}=x_{K}', (X_{\hat K\cup \ul K})_{x_{K}'}=x_{\hat K\cup \ul K}' \Bigr\}
		. 
	\end{align*}
	For convenience, we denote $x_{\ol K}^{*}=x_{\ol K}'=x_{\ol K}$ and $x_{K}^{*}=0$. Based on the monotonicity assumption, we have $x^*\preceq x' \preceq x$. 
	Let  and $\ul K = \{i\mid \min(K)<i \leq p, i\notin K\}$. 
	
	\begin{align*}
		&\quad \Pr\Bigl\{Y_{x^{*}}=0, Y_{x'}=1, X=x, (X_{\hat K \cup \ul K})_{X_{K}=0} = x_{\hat K \cup \ul K}^{*}, (X_{K})_{X_{\ol K\cup \hat K}=0}=x_{K}', (X_{\hat K\cup \ul K})_{x_{K}'}=x_{\hat K\cup \ul K}' \Bigr\}
		\\
		&= \Pr\bigl\{Y_{x^{*}}=0, Y_{x'}=1\bigr\} \times \prod_{i\in \ol K} \Pr(X_{i}=x_{i}\mid \ol X_{i}=\ol x_{i}) \times\prod_{i\in K} \Pr\bigl\{(X_{i})_{\ol x_{i}}=x_{i}, (X_{i})_{\ol x_{i}''}=x_{i}' \bigr\} 
		\\&\qquad \times \prod_{i\in \hat K\cup \ul K} \Pr\bigl\{(X_{i})_{\ol x_{i}}=x_{i}, (X_{i})_{\ol x_{i}'}=x_{i}', (X_{i})_{\ol x_{i}^{*}}=x_{i}^{*}\bigr\}
		, 
	\end{align*}
	where $x_{K}''=x_{K}'$, $x_{\ol K\cup \hat K}''=0$, and $x_{K}''\preceq x_{K}$. 
	Consider each term in the product. 
	\begin{align*}
		&\quad \Pr(Y_{x^{*}}=0, Y_{x'}=1) 
		\\
		&= \Pr(Y_{x'}=1) - \Pr(Y_{x^{*}}=1, Y_{x'}=1) 
		\\
		&= \Pr(Y_{x'}=1) - \Pr(Y_{x^{*}}=1) 
		\\
		&= \Pr(Y=1 \mid x') - \Pr(Y=1 \mid x^{*})
		, 
	\end{align*}
	where the second equation holds because of the monotonicity assumption, and the last equation holds because of the no-confounding assumption. 

	For $i\in K$, we have
	\begin{align*}
		&\quad \Pr\bigl\{(X_{i})_{\ol x_{i}}=x_{i}, (X_{i})_{\ol x_{i}''}=x_{i}'' \bigr\} 
		\\
		&= x_{i}'' x_{i} \Pr\bigl\{(X_{i})_{\ol x_{i}}=1, (X_{i})_{\ol x_{i}''}=1 \bigr\} + (1-x_{i}'')(1-x_{i}) \Pr\bigl\{(X_{i})_{\ol x_{i}}=0, (X_{i})_{\ol x_{i}''}=0 \bigr\} 
		\\&\qquad + (1-x_{i}'')x_{i} \Pr\bigl\{(X_{i})_{\ol x_{i}}=1, (X_{i})_{\ol x_{i}''}=0 \bigr\} +  x_{i}'' (1-x_{i}) \Pr\bigl\{(X_{i})_{\ol x_{i}}=0, (X_{i})_{\ol x_{i}''}=1 \bigr\}
		\\
		&= x_{i}'' x_{i} \Pr\bigl\{(X_{i})_{\ol x_{i}''}=1 \bigr\} + (1-x_{i}'')(1-x_{i}) \Pr\bigl\{(X_{i})_{\ol x_{i}}=0\bigr\} 
		\\&\qquad + (1-x_{i}'')x_{i} \Bigl[\Pr\bigl\{(X_{i})_{\ol x_{i}}=1 \bigr\} - \Pr\bigl\{(X_{i})_{\ol x_{i}''}=1 \bigr\}\Bigr] +  x_{i}'' (1-x_{i})\cdot 0
		\\
		&= (2x_{i}''-1) x_{i} \Pr\bigl\{(X_{i})_{\ol x_{i}''}=1 \bigr\} + (1-x_{i}'')\Bigl[(1-x_{i}) \Pr\bigl\{(X_{i})_{\ol x_{i}}=0\bigr\} + x_{i} \Pr\bigl\{(X_{i})_{\ol x_{i}}=1 \bigr\} \Bigr]
		\\
		&= (2x_{i}''-1) x_{i} \cdot \Pr\bigl\{(X_{i})_{\ol x_{i}''}=1 \bigr\} + (1-x_{i}'') \cdot \Pr\bigl\{(X_{i})_{\ol x_{i}}=x_{i}\bigr\}
		\\
		&= (2x_{i}''-1) x_{i} \cdot \Pr(X_{i}=1 \mid \ol X_{i}=\ol x_{i}'') + (1-x_{i}'') \cdot \Pr(X_{i}=x_{i}\mid \ol X_{i}=\ol x_{i})
		\\
		&= (2x_{i}'-1) x_{i} \cdot \Pr(X_{i}=1 \mid \ol X_{i}=\ol x_{i}'') + (1-x_{i}') \cdot \Pr(X_{i}=x_{i}\mid \ol X_{i}=\ol x_{i}), 
	\end{align*}
	where $x''_{\ol K\cup\hat K}=0$, $x_{K}''=x_{K}'$; the second equation holds because of the monotonicity assumption, and the penultimate equation holds because of the no-confounding assumption. 

	For $i\in \hat K\cup \ul K$, we have
	\begin{align*}
		&\quad \Pr\bigl\{(X_{i})_{\ol x_{i}}=x_{i}, (X_{i})_{\ol x_{i}'}=x_{i}', (X_{i})_{\ol x_{i}^*}=x_{i}^{*}\bigr\} 
		\\
		&= x_{i} \Pr\bigl\{(X_{i})_{\ol x_{i}}=1, (X_{i})_{\ol x_{i}'}=x_{i}', (X_{i})_{\ol x_{i}^*}=x_{i}^{*}\bigr\} + (1-x_{i}) \Pr\bigl\{(X_{i})_{\ol x_{i}}=0, (X_{i})_{\ol x_{i}'}=x_{i}', (X_{i})_{\ol x_{i}^*}=x_{i}^{*}\bigr\} 
		\\
		&= x_{i} x_{i}'(2x_{i}^{*}-1)\Pr\bigl\{(X_{i})_{\ol x_{i}}=1, (X_{i})_{\ol x_{i}^*}=1\bigr\} + x_{i}(1-x_{i}^{*})\Pr\bigl\{(X_{i})_{\ol x_{i}}=1, (X_{i})_{\ol x_{i}'}=x_{i}'\bigr\} 
		\\&\qquad + (1-x_{i})(1-x_{i}')(1-x_{i}^*) \Pr\bigl\{(X_{i})_{\ol x_{i}}=0\bigr\} 
		\\
		&= (1-x_{i})(1-x_{i}')(1-x_{i}^*) \Pr\bigl\{(X_{i})_{\ol x_{i}}=0\bigr\} + x_{i} x_{i}'(2x_{i}^{*}-1)\Pr\bigl\{(X_{i})_{\ol x_{i}^*}=1\bigr\} 
		\\&\qquad + x_{i}x_{i}'(1-x_{i}^{*})\Pr\bigl\{(X_{i})_{\ol x_{i}'}=1\bigr\} + x_{i}(1-x_{i}')(1-x_{i}^{*})\bigl[\Pr\bigl\{(X_{i})_{\ol x_{i}}=1\bigr\} - \Pr\bigl\{(X_{i})_{\ol x_{i}'}=1\bigr\} \bigr]
		\\
		&= (1-x_{i}')(1-x_{i}^*) \Pr\bigl\{(X_{i})_{\ol x_{i}}=x_{i}\bigr\} + x_{i} x_{i}'(2x_{i}^{*}-1)\Pr\bigl\{(X_{i})_{\ol x_{i}^*}=1\bigr\} + x_{i}(2x_{i}'-1)(1-x_{i}^{*})\Pr\bigl\{(X_{i})_{\ol x_{i}'}=1\bigr\} 
		\\
		&= (1-x_{i}')(1-x_{i}^*) \Pr\bigl(X_{i}=x_{i} \mid \ol X_{i}=\ol x_{i}\bigr) + x_{i} x_{i}'(2x_{i}^{*}-1)\Pr\bigl(X_{i}=1\mid \ol X_{i}=\ol x_{i}^*\bigr) 
		\\&\qquad + x_{i}(2x_{i}'-1)(1-x_{i}^{*})\Pr\bigl(X_{i}=1\mid \ol X_{i}=\ol x_{i}'\bigr) 
		, 
	\end{align*}
	where the second and third equations hold because of the monotonicity assumption, and the last equation holds because of the no-confounding assumption. 

	Therefore,
	\begin{align*}
		\frac{\Pr(Y_{x^{*}}=0, Y_{x'}=1)}{\Pr(Y=1\mid X=x)} 
		= \frac{\Pr(Y=1 \mid x') - \Pr(Y=1 \mid x^{*})}{\Pr(Y=1\mid X=x)}
		; 
	\end{align*}
	For $i\in K$,
	\begin{align*}
		\frac{ \Pr\bigl\{(X_{i})_{\ol x_{i}}=x_{i}, (X_{i})_{\ol x_{i}''}=x_{i}''\bigr\}  }{\Pr(X_{i}=x_{i}\mid \ol X_{i}=\ol x_{i})} 
		= (1-x_{i}') + (2x_{i}'-1) x_{i} \cdot \frac{\Pr(X_{i}=1 \mid \ol X_{i}=\ol x_{i}'')}{\Pr(X_{i}=x_{i}\mid \ol X_{i}=\ol x_{i})} 
		; 
	\end{align*}
	For $i\in \hat K\cup \ul K$, 
	\begin{align*}
		&\quad \frac{\Pr\bigl\{(X_{i})_{\ol x_{i}}=x_{i}, (X_{i})_{\ol x_{i}'}=x_{i}', (X_{i})_{\ol x_{i}^*}=x_{i}^{*}\bigr\}}{\Pr(X_{i}=x_{i}\mid \ol X_{i}=\ol x_{i})} 
		\\ 
		&= (1-x_{i}')(1-x_{i}^*) + x_{i} x_{i}'(2x_{i}^{*}-1) \frac{\Pr\bigl(X_{i}=1\mid \ol X_{i}=\ol x_{i}^*\bigr)}{\Pr(X_{i}=x_{i}\mid \ol X_{i}=\ol x_{i})}
		\\&\qquad + x_{i}(2x_{i}'-1)(1-x_{i}^{*}) \frac{\Pr\bigl(X_{i}=1\mid \ol X_{i}=\ol x_{i}'\bigr)}{\Pr(X_{i}=x_{i}\mid \ol X_{i}=\ol x_{i})}
		. 
	\end{align*}

	Combining the above results, we have 
	\begin{align*}
		&\quad \PRC(X_{K}\Rightarrow Y\mid X=x, Y=1)
		\\
		&= \frac{1}{\Pr(X=x,Y=1)} \Pr\bigl\{Y_{X_{K}=0}=0, Y_{(X_{K})_{X_{\ol K\cup \hat K}=0}}=1, X=x\bigr\}
		\\
		&= \sum_{\substack{x^*\preceq x' \preceq x: \\ x_{\ol K}^{*}=x_{\ol K}'=x_{\ol K},\ x_{K}^{*}=0}} \frac{\Pr\Bigl\{Y_{x^{*}}=0, Y_{x'}=1, X=x, (X_{\hat K \cup \ul K})_{X_{K}=0} = x_{\hat K \cup \ul K}^{*}, (X_{K})_{X_{\ol K\cup \hat K}=0}=x_{K}', (X_{\hat K\cup \ul K})_{x_{K}'}=x_{\hat K\cup \ul K}' \Bigr\}}{\Pr(X=x,Y=1)}
		\\
		&= \sum_{\substack{x^*\preceq x' \preceq x: \\ x_{\ol K}^{*}=x_{\ol K}'=x_{\ol K},\ x_{K}^{*}=0}} \frac{\Pr(Y_{x^{*}}=0, Y_{x'}=1)}{\Pr(Y=1\mid X=x)} \times \prod_{i\in \ol K} \frac{\Pr(X_{i}=x_{i}\mid \ol X_{i}=\ol x_{i})}{\Pr(X_{i}=x_{i}\mid \ol X_{i}=\ol x_{i})} \times\prod_{i\in K} \frac{\Pr\bigl\{(X_{i})_{\ol x_{i}}=x_{i}, (X_{i})_{\ol x_{i}''}=x_{i}' \bigr\}}{\Pr(X_{i}=x_{i}\mid \ol X_{i}=\ol x_{i})} 
		\\&\qquad\qquad\qquad \times \prod_{i\in \hat K\cup \ul K} \frac{\Pr\bigl\{(X_{i})_{\ol x_{i}}=x_{i}, (X_{i})_{\ol x_{i}'}=x_{i}', (X_{i})_{\ol x_{i}^{*}}=x_{i}^{*}\bigr\}}{\Pr(X_{i}=x_{i}\mid \ol X_{i}=\ol x_{i})}
		\\
		&= \sum_{\substack{x^*\preceq x' \preceq x: \\ x_{\ol K}^{*}=x_{\ol K}'=x_{\ol K},\ x_{K}^{*}=0}} \frac{\Pr(Y=1 \mid x') - \Pr(Y=1 \mid x^{*})}{\Pr(Y=1\mid X=x)} \times \prod_{i\in K} \Bigl\{ (1-x_{i}') + x_{i}(2x_{i}'-1) \cdot \frac{\Pr(X_{i}=1 \mid \ol X_{i}=\ol x_{i}'')}{\Pr(X_{i}=x_{i}\mid \ol X_{i}=\ol x_{i})} \Bigr\}
		\\&\qquad\times \prod_{i\in \hat K\cup \ul K} \Bigl\{ (1-x_{i}')(1-x_{i}^*) + x_{i}(2x_{i}'-1)(1-x_{i}^{*}) \frac{\Pr\bigl(X_{i}=1\mid \ol X_{i}=\ol x_{i}'\bigr)}{\Pr(X_{i}=1\mid \ol X_{i}=\ol x_{i})} + x_{i} x_{i}'(2x_{i}^{*}-1) \frac{\Pr\bigl(X_{i}=1\mid \ol X_{i}=\ol x_{i}^*\bigr)}{\Pr(X_{i}=1\mid \ol X_{i}=\ol x_{i})} \Bigr\}
		, 
	\end{align*}
	where $x_{K}''=x_{K}'$, $x_{\ol K\cup \hat K}''=0$. 
\end{proof}






\end{document}